\documentclass[preprint,12pt]{elsarticle}

\usepackage{amsmath,amssymb,amsfonts}
\usepackage{amsthm}

\usepackage{multirow}
\usepackage{graphicx}
\usepackage{float}

\usepackage[ruled]{algorithm}
\usepackage{algpseudocode} 

\usepackage{mathtools}
\usepackage{array}
\usepackage{textcomp}
\usepackage{xspace}
\usepackage{color}
\usepackage{url}


\graphicspath{{./Pics/}}

\usepackage{hyperref}
\hypersetup{
  colorlinks=true,
  linkcolor=black,
  urlcolor=blue,
  citecolor=blue,
}

\newtheorem{theorem}{Theorem}
\newtheorem{lemma}{Lemma}

\theoremstyle{definition}

\newtheorem{assumption}{Assumption}

\theoremstyle{remark}
\newtheorem{remark}{Remark}

\journal{Journal of Energy Storage}
\begin{document}
\begin{frontmatter}



\title{Privacy-Preserving Distributed Control for a Networked Battery Energy Storage System}



\author[label1]{Mihitha Maithripala}
\author[label1]{Zongli Lin}

\affiliation[label1]{organization={Charles L. Brown Department of Electrical and Computer Engineering},
            addressline={University of Virginia}, 
            city={Charlottesville},
            postcode={22904}, 
            state={VA},
            country={U.S.A.}}


\begin{abstract}
The increasing deployment of distributed Battery Energy Storage Systems (BESSs) in modern power grids necessitates effective coordination strategies to ensure state-of-charge (SoC) balancing and accurate power delivery. While distributed control frameworks offer scalability and resilience, they also raise significant privacy concerns due to the need for inter-agent information exchange. This paper presents a novel privacy-preserving distributed control algorithm for SoC balancing in a networked BESS. The proposed framework includes a distributed power allocation law that is designed based on two privacy-preserving distributed estimators, one for the average unit state and the other for the average desired power. The average unit state estimator is designed via the state-decomposition method without disclosing sensitive internal states. The proposed power allocation law based on these estimators ensures asymptotic SoC balancing and global power delivery while safeguarding agent privacy from external eavesdroppers. The effectiveness and privacy-preserving properties of the proposed control strategy are demonstrated through simulation results.

\end{abstract}



\begin{keyword}
Power delivery \sep battery energy storage systems \sep state-of-charge \sep distributed control \sep privacy preservation \sep dynamic average consensus \sep state-decomposition
\end{keyword}

\end{frontmatter}

\allowdisplaybreaks
	
\section{Introduction}\label{sec:introduction}
Battery Energy Storage Systems (BESSs) are increasingly essential for the operation of modern smart grids due to their ability to buffer the variability of renewable energy, support grid stability, and improve energy efficiency \cite{yu2025optimized, hill2012battery, lawder2014battery}. In most cases, a BESS consists of several battery units located at different sites, forming a multi-agent network. The distributed deployment of BESSs on a household, industrial, or community scale allows localized control, resilience to disturbances, and flexible energy management. However, such systems introduce challenges in the coordination of energy storage resources while maintaining operational goals such as state-of-charge (SoC) balancing and total power delivery/tracking~\cite{meng2021distributed, xu2018distributed}. SoC balancing is crucial for extending battery life, ensuring uniform aging across units, and preventing overcharge or deep discharge conditions~\cite{qiu2025distributed}.

Centralized control approaches rely on global knowledge of all unit states, which limits scalability, increases communication overhead, and is prone to a single point of failure~\cite{liu2025fixed}. In contrast, distributed control methods, particularly those based on dynamic consensus or distributed optimization, have been extensively explored for scalable and reliable SoC balancing~\cite{qiu2025distributed,xing2019distributed, qian2023distributed, lu2014state, huang2023voltage}. These methods enable agents to coordinate through local communication with their neighbors, supporting robust performance even under partial connectivity or information loss.

Numerous decentralized control strategies have emerged in response to the growing complexity of power system applications, each tailored to specific objectives such as voltage regulation, power sharing, or balanced SoC. Many of these approaches rely on consensus convergence techniques and graph-theoretic connectivity to achieve coordination across the network~\cite{qu2018cooperative, bidram2012secondary, yan2019event}. As the implementation of distributed BESS control gains momentum, addressing diverse objectives under varying communication and operational constraints remains a key area of focus.

Despite their advantages, distributed control frameworks raise serious concerns about privacy. Many algorithms require agents to share internal information, such as SoC levels, power references, voltage values, or other state variables, with neighboring agents. This opens the door to inference attacks that can compromise the privacy of participating units~\cite{kossek2024privacy,luan2024privacy}. 

Two common adversarial models emerge in this setting. The first involves external or internal eavesdroppers who monitor communication channels to capture exchanged data and infer sensitive internal states or usage patterns. External eavesdroppers are typically malicious third parties outside the system, while internal eavesdroppers may include compromised nodes within the network that have been taken over by an attacker. The second model involves honest-but-curious agents, who follow the protocol correctly but attempt to infer private information about their neighbors using the data received during coordination~\cite{kossek2024privacy,kia2015dynamic}. For SoC balancing in a BESS example, given enough communication samples, an attacker could reconstruct the trajectory of a neighboring unit’s SoC or estimate its power output. These threats are particularly relevant in collaborative, yet competitive environments, such as peer-to-peer energy trading platforms or energy communities, where participants aim to maximize their own benefit without disclosing operational strategies~\cite{kossek2024privacy, zhang2022privacy}.

Several approaches have been developed to address these privacy threats. Encryption-based methods, such as homomorphic encryption, enable computations on encrypted data without requiring prior decryption. This allows internal agents to process the data without accessing the underlying information. However, these methods are computationally intensive and often unsuitable for real-time applications~\cite{7852360, lu2018privacy}. Differential privacy techniques address privacy by injecting calibrated statistical noise into shared data, providing rigorous privacy guarantees. Nevertheless, they introduce trade-offs between privacy levels, accuracy, and convergence speed~\cite{Nozari2015DifferentiallyPA, ding2021differentially, 9910413}. As an alternative, methods such as state-decomposition have gained attention for their ability to preserve privacy while maintaining algorithmic performance~\cite{luan2024privacy,zhang2022privacy, 8657789}. This approach partitions the state of each agent into an observable (shared) component and a hidden (private) component. Only the observable part participates in distributed computation, whereas the private component remains hidden to safeguard sensitive information, preventing adversaries from inferring private information through shared data.

State-decomposition enables accurate algorithmic convergence while inherently shielding sensitive information from exposure. It has been successfully applied in dynamic average consensus and static average consensus~\cite{zhang2022privacy, 8657789}, and is now gaining interest in distributed optimization of energy systems~\cite{kossek2024privacy,luan2024privacy}. By embedding privacy into the control architecture itself, these methods inherently protect information without sacrificing performance that aligns well with distributed real-time control requirements.

In this work, we propose a novel privacy-preserving distributed control framework for SoC balancing in networked BESSs. We design a power allocation algorithm based on two estimators, the average unit state estimator and the average desired power estimator. Each battery unit independently estimates the average unit state and average desired power, and contributes to local power allocation. The proposed method enables accurate SoC balancing and power tracking. The algorithm is lightweight, scalable, and well suited for implementation over an undirected communication graph network.

The contributions of the paper can be summarized as follows,
\begin{itemize}

    \item We develop a privacy-preserving dynamic average consensus-based estimator that leverages state-decomposition to protect the privacy of both individual battery internal states and their average. The proposed state-decomposed estimator safeguards the internal states of individual batteries against external eavesdroppers. Additionally, the algorithm enhances privacy by ensuring that each battery's estimated average converges to a predetermined scaled version of the true average.

    \item We design a privacy-preserving distributed estimator for the average desired power, which achieves consensus over a scaled version of the true average. This approach prevents external observers from inferring global power demand while allowing each battery to recover the correct value locally.

    \item We propose a distributed power allocation law for BESSs that achieves SoC balancing and desired power tracking using the above privacy-preserving distributed estimators.

    \item We provide a theoretical analysis establishing convergence, stability, and privacy guarantees of the proposed framework and demonstrate its effectiveness through numerical simulations.

\end{itemize}

The remainder of the paper is structured as follows. Section~\ref{sec:preliminaries} provides the preliminaries, including a description of the communication network and an attack model relevant to dynamic average consensus algorithm. Section~\ref{sec:problem} formulates the baseline control problem for coordinated SoC balancing and power tracking without privacy considerations. Section~\ref{sec:privacy} introduces the proposed privacy-preserving distributed control algorithm, which incorporates estimators for both the average unit state and the average desired power. Section~\ref{sec:simulation} presents simulation results that validate the effectiveness and privacy protection achieved by the proposed strategy. Finally, Section~\ref{sec:conclusion} concludes the paper.
\begin{figure}[!t]
\centering
\includegraphics[width=5in]{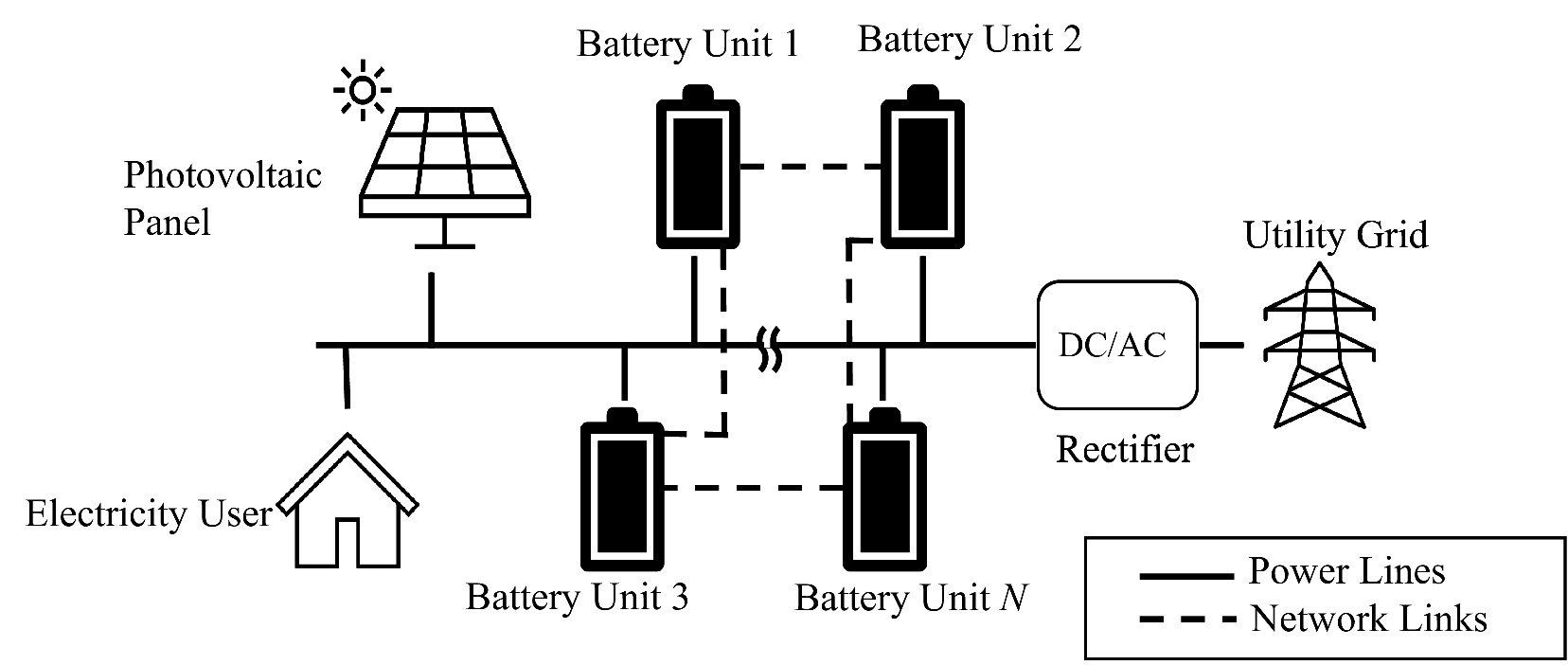}
\caption{The illustration of a microgrid including a BESSs.}
\label{fig:fig1}
\end{figure}

\section{Preliminaries}\label{sec:preliminaries}
\subsection{Communication Network}
We consider a BESS consisting of $N$ networked battery units. Figure \ref{fig:fig1} provides an illustrative example to help to understand the BESS setup in a microgrid. The interactions among the battery units are represented by a graph \( \mathcal{G} = (\mathcal{N}, \mathcal{E}) \), where \( \mathcal{N} = \{1, 2, \dots, N\} \) represents the set of nodes, each corresponding to a battery unit, and \( \mathcal{E} \subseteq \mathcal{N} \times \mathcal{N} \) denotes the set of edges, each representing a communication link between units. If \( (i, j) \in \mathcal{E} \), it means that communication from unit \( i \) to unit \( j \) is permitted. In such a case, unit \( i \) serves as an \textit{in-neighbor} of unit \( j \), while unit \( j \) acts as an \textit{out-neighbor} of unit \( i \). The set of all in-neighbors of unit $i$ is defined as $\mathcal{N}_i = \{j \in \mathcal{N} \mid (j,i) \in \mathcal{E} \}$. A graph \( \mathcal{G} \) is classified as undirected if the presence of an edge \( (i, j) \in \mathcal{E} \) automatically implies that \( (j, i) \in \mathcal{E} \) as well. In the graph \( \mathcal{G} \), a path from node \( i_1 \) to node \( i_k \) is defined by a sequence of directed edges \( \{(i_1, i_2), (i_2, i_3), \dots, (i_{k-1}, i_k)\} \), where the nodes \( i_1, i_2, \dots, i_k \) are all unique. The graph \( \mathcal{G} \) is considered connected if every pair of distinct nodes can be linked through such a path.

The adjacency matrix corresponding to the graph \( \mathcal{G} \) is given by \( A = [a_{ij}] \in \mathbb{R}^{N \times N} \), where each entry \( a_{ij} \) is equal to 1 if there exists a directed edge from unit \( j \) to unit \( i \) (i.e., \( (j, i) \in \mathcal{E} \)) and \( i \neq j \), and \( a_{ij} = 0 \) otherwise. If $\mathcal{G}$ is undirected, then $a_{ij} = a_{ji}$. The Laplacian matrix of the graph \( \mathcal{G} \) is defined as \( L = [l_{ij}] \in \mathbb{R}^{N \times N} \), where \( l_{ij} = -a_{ij} \) for \( i \ne j \), and \(l_{ii} = \sum_{k=1, k \neq i}^{N} a_{ik}\). Its eigenvalues are denoted by \( \lambda_1, \lambda_2, \dots, \lambda_N \). If \( \mathcal{G} \) is connected and undirected, then \( L \) satisfies the following properties \cite{4118472}: \( \lambda_1 = 0 < \lambda_2 \leq \dots \leq \lambda_N \), $\mathbf{1}_N^{\rm T} L = L \mathbf{1}_N = 0$, and for any column vector \( \delta \) with \( \mathbf{1}_N^{\rm T} \delta = 0 \), it holds that \( \delta^{\rm T} L \delta \geq \lambda_2 \|\delta\|^2 \). (Let $\mathbf{1}_N$ denote the column vector with $N$ elements, all equal to $1$).

We impose the following assumption on the communication graph $\mathcal{G}$.

\begin{assumption} \label{ass: graph}
The communication graph \( \mathcal{G} = (\mathcal{N}, \mathcal{E}) \) is undirected and connected.
\end{assumption}

\subsection{Privacy Definition and Attack Model for Dynamic Average Consensus Algorithm}\label{sec:attack}

This work examines a BESS consisting of $N$ battery units that seek to collectively estimate the average of time-varying internal states $x_i(t) \in \mathbb{R}$ using the dynamic average consensus (DAC) algorithm:
\begin{equation}
\dot{\hat{x}}_{{\rm a},i}(t) = \dot{x}_i(t) - \beta \sum_{j=1}^{N} a_{ij} (\hat{x}_{{\rm a},i}(t) - \hat{x}_{{\rm a},j}(t)),
\end{equation}
where \( \hat{x}_{{\rm a},i} \) represents the local estimate of $\frac{1}{N} \sum_{j=1}^{N} x_j(t)$, \( \beta > 0 \) is a design parameter, and $a_{ij} \in \{0,1\}$ represents the communication connexity between battery unit $i$ and $j$. While this method ensures consensus, it requires each battery unit to share it's estimate $\hat{x}_{{\rm a},i}(t)$, which implicitly encodes information about the battery unit state $x_i(t)$ and its derivative $\dot{x}_i(t)$.

In this work, we consider an external eavesdropper who has knowledge of the communication network $(A,\beta)$ and access to all transmitted information $\hat{x}_{{\rm a},i}(t)$.
 The concern is that the eavesdropper can exploit these data to infer a battery unit's private information $x_i(t)$ and $\dot{x}_i(t)$.

To demonstrate the vulnerability of the DAC scheme, we present the observer-based attack model from \cite{zhang2022privacy}. Let variables ${{v}}_i(t)$, $\xi_i(t)$, and $\phi_i(t)$ represent the eavesdropper's reconstructions of $\hat{x}_{{\rm a},i}(t)$, $x_i(t)$, and $\dot{x}_i(t)$, respectively. Then,
\begin{subequations}
\begin{alignat}{3}
\dot{{v}}_i(t) &= {\phi}_i(t) - \beta \sum_{j=1}^{N} a_{ij}(\hat{x}_{{\rm a},i}(t) - \hat{x}_{{\rm a},j}(t)) \notag \\
&\quad + k_1(\hat{x}_{{\rm a},i}(t) - {v}_i(t)), \\
\dot{\xi}_i(t) &= k_2(\hat{x}_{{\rm a},i}(t) - z_i(t) - \xi_i(t)) + \phi_i(t), \\
\phi_i(t) &= k_3 \hat{x}_{{\rm a},i}(t) + \phi'_i(t), \\
\dot\phi'_i(t) &= -k_3 \left(\phi_i(t) - \beta \sum_{j=1}^{N} a_{ij}(\hat{x}_{{\rm a},i}(t) - \hat{x}_{{\rm a},j}(t)) \right)  \notag \\
  &\quad +k_4(\hat{x}_{{\rm a},i}(t) - {v}_i(t)), \\
\dot{z}_i(t) &= -\beta \sum_{j=1}^{N} a_{ij}(\hat{x}_{{\rm a},i}(t) - \hat{x}_{{\rm a},j}(t)), \quad z_i(0) = 0,
\end{alignat}
\end{subequations}
where \( k_1, k_2, k_3, k_4 \in \mathbb{R} \) are positive design parameters, and \( \hat{\phi}'_i(t)\) and \( z_i(t)\) are auxiliary variables.

Under the assumption that $x_i(t), \dot{x}_i(t), \ddot{x}_i(t) \in \mathcal{L}_\infty$, it has been shown \cite{zhang2022privacy} that this observer guarantees uniformly ultimately bounded (UUB) estimation errors, and asymptotic recovery is possible if $\ddot{x}_i(t) \in \mathcal{L}_2$. This confirms that the conventional DAC algorithm is vulnerable to privacy leakage unless privacy-preserving mechanisms are incorporated.

\section{Problem Formulation}\label{sec:problem}

This study focuses on a distributed battery network comprising \( N \) coordinated units, where each unit is indexed by $i \in \mathcal{N} = \{1,2,\dots,N\}$. The SoC dynamics of each energy storage device are modeled using the Coulomb counting approach. Accordingly, the SoC of the \( i \)th battery unit at time \( t \geq 0 \) is given by~\cite{xing2019distributed}
\begin{equation}
S_i(t) = S_i(0) - \frac{1}{C_i} \int_{0}^{t} i_i(\tau) d\tau, \quad t \geq 0,
 \label{eq:1}
\end{equation}
where \( S_i(0) \) represents the initial SoC, \( C_i \) denotes the battery capacity, and the output current is represented by  \( i_i(t) \). The sign of \( i_i(t) \) determines the operational mode of the battery: a positive current (\( i_i(t) > 0 \)) indicates discharging, while a negative current (\( i_i(t) < 0 \)) signifies charging. Differentiating equation \eqref{eq:1} with respect to time results in:
\begin{equation}
\dot{S}_i(t) = -\frac{1}{C_i} i_i(t).
 \label{eq:2}
\end{equation}

The output power \( p_i(t) \) of battery unit \( i \) is expressed as
\begin{equation}
p_i(t) = V_i(t) i_i(t),
 \label{eq:3}
\end{equation}
where \( V_i(t) \) denotes the voltage output of the battery unit. Similar to the current, positive power (\( p_i(t) > 0 \)) corresponds to discharging, whereas negative power (\( p_i(t) < 0 \)) corresponds to charging. 

Bidirectional DC--DC converters in battery energy storage systems are typically designed to regulate a constant output voltage~\cite{inoue2007bidirectional}. Therefore, we simplify the model by assuming that each battery unit maintains a constant output voltage, i.e., \( V_i(t) = V_i \). Under this assumption, substituting equation \eqref{eq:3} into equation \eqref{eq:2} results in
\begin{equation}
\dot{S}_i(t) = -\frac{1}{C_i V_i} p_i(t).
 \label{eq:4}
\end{equation}

This equation describes the dynamics of the SoC, taking into account both its output power and capacity.

The aim of this work is to develop a privacy-preserving distributed power allocation strategy that ensures SoC balancing across battery units while meeting the required power demand in both charge and discharge operations. 

The following are the control objectives for the BESS.

\textbf{Problem 1:} Let the BESS comprise \( N \) interconnected battery units, where the SoC dynamics are governed by equation~\eqref{eq:4}.
The goal is to design distributed control laws for managing charging and discharging power such that:

1) The SoC balancing among the battery units is achieved in steady state with a predefined accuracy \( \epsilon_{\rm s} \geq 0 \), ensuring:
\[
    \lim_{t \to \infty} |S_i(t) - S_j(t)| \leq \epsilon_{\rm s}, \quad  i, j \in \mathcal{N}.
\]

2) In steady state, the total charging/discharging power \( p_\Sigma(t) \) follows the desired power \( p^*(t) \) within a predefined accuracy \( \epsilon_{\rm p} \geq 0 \), i.e.,
\[
\lim_{t \to \infty} |p_{\Sigma}(t) - p^*(t)| \leq \epsilon_{\rm p},
\]
where the total power is defined as:
\[
p_{\Sigma}(t) = \sum_{i=1}^{N} p_i(t).
\]

To solve this problem, we first analyze the power allocation problem in a non-privacy-preserving setting in Section \ref{sec:Non privacy}, establishing the fundamental principles and control laws necessary for effective SoC balancing. Subsequently, in Section \ref{sec:privacy} we will introduce privacy-preserving mechanisms, ensuring that the distributed power allocation strategy maintains confidentiality while still achieving the desired system objectives. 

Before analyzing the non-privacy-preserving case, as explored in References \cite{meng2021distributed}, \cite{xing2019distributed} and \cite{qian2023distributed}, we first establish a set of mild assumptions concerning the total power requirement and the extent to which individual battery units have access to this information.
\begin{assumption} \label{ass: power}
The desired power \( p^*(t) \) is bounded such that \( p \leq |p^*(t)| \leq \bar{p} \) and its rate of change satisfies \( |\dot{p}^*(t)| \leq \psi \), with \( p \), \( \bar{p} \), and \( \psi \) being positive constants.  
\end{assumption}
\begin{assumption} \label{ass: knowledge}
 The desired power \( p^*(t) \) is known by at least one battery unit in the BESS.
\end{assumption}

\subsection{Power Allocation Law Based on Average Estimators in a Non-Privacy-Preserving Setting} \label{sec:Non privacy}

This section revisits the distributed power allocation algorithms introduced in \cite{meng2021distributed} and introduces a new battery average unit state estimator using the conventional dynamic average consensus algorithm.

The state of battery unit \( i \in \mathcal{N} \) is defined to facillitate power allocation as
\[
x_i(t) =
\begin{cases} 
x_{{\rm d},i}(t) = C_i V_i S_i(t), & \text{(discharging mode)}, \\
x_{{\rm c},i}(t) = C_i V_i (1 - S_i(t)), & \text{(charging mode)}.
\end{cases}
\]

Here, \( x_i(t) \) denotes the amount of electrical energy that battery unit \( i \) is capable of storing during charging or delivering during discharging. As a result of physical limitations, positive constants \( a_1 \) and \( a_2 \) exist such that
\begin{equation}
a_1 \leq x_i(t) \leq a_2, \quad  t \geq 0, \quad  i \in \mathcal{N} .
\label{eq:5}
\end{equation}

From the SoC dynamics in equation \eqref{eq:4}, the expression for \( \dot{x}_i(t) \) can be expressed as
\begin{equation}
\dot{x}_i(t) =\!
\begin{cases} 
\dot{x}_{{\rm d},i}(t) = C_i V_i \dot{S}_i(t) = -p_i(t), &  \!\!\text{(discharging mode)}, \\
\dot{x}_{{\rm c},i}(t) = -C_i V_i \dot{S}_i(t) = p_i(t), & \!\!\text{(charging mode)}.
\end{cases}
\label{eq:mode}
\end{equation}

The power allocation for the \( i \)th battery unit is governed by the following control law.
\begin{equation}
p_i(t) =
\begin{cases} 
\frac{x_{{\rm d},i}(t)}{\sum_{j=1}^{N} x_{{\rm d},j}(t)} p^*(t), & \text{(discharging mode)}, \\
\frac{x_{{\rm c},i}(t)}{\sum_{j=1}^{N} x_{{\rm c},j}(t)} p^*(t), & \text{(charging mode)}.
\end{cases}
\label{eq:6}
\end{equation}

To simplify the representation, we introduce the average unit state as follows,
\begin{equation}
x_{\rm a}(t) =
\begin{cases} 
\frac{1}{N} \sum_{i=1}^{N} x_{{\rm d},i}(t), & \text{(discharging mode)}, \\
\frac{1}{N} \sum_{i=1}^{N} x_{{\rm c},i}(t), & \text{(charging mode)},
\end{cases}
\label{eq:7}
\end{equation}
and define the average desired power \(p_{\rm a}(t)\) as
\begin{equation}
p_{\rm a}(t) = \frac{1}{N} p^*(t).
\label{eq:8}
\end{equation}

Using these definitions, the power allocation law in~\eqref{eq:6} can be reformulated as
\begin{equation}
p_i(t) = \frac{x_i(t)}{x_{\rm a}(t)} p_{\rm a}(t).
\label{eq:9}
\end{equation}

Individual battery units may not have access to \( x_{\rm a}(t) \) and \( p_{\rm a}(t) \), as these are global parameters within the BESS. Therefore, these values must be locally estimated to enable decentralized implementation.

All individual battery units employ a dynamic average consensus scheme, as proposed in~\cite{spanos2005dynamic}, to construct a distributed estimator for the average unit state,
\begin{equation*}
\dot{\hat{x}}_{{\rm a},i}(t) = \dot{x}_i(t)  - \beta \sum_{j=1}^{N} a_{ij} (\hat{x}_{{\rm a},i}(t) - \hat{x}_{{\rm a},j}(t)), 
\end{equation*}
\begin{equation}
\hat{x}_{{\rm a},i}(0) = x_i(0),
\label{eq:10}
\end{equation}
where the local estimate of \( x_{\rm a}(t) \) is represented by \( \hat{x}_{{\rm a},i} \), and \( \beta > 0 \) is introduced as a design parameter. Note that although equation \eqref{eq:10} tracks \( x_{\rm a}(t) \) with a steady-state error, which can be made arbitrarily small by tuning the value of \( \beta \).

In a similar manner, the following distributed estimator is constructed to estimate the average desired power at battery unit \( i \). 
\begin{equation*}
\dot{\hat{p}}_{{\rm a},i}(t)\!=\!-\kappa \!\left( \sum_{j=1}^{N} a_{ij} (\hat{p}_{{\rm a},i}(t) - \hat{p}_{{\rm a},j}(t))\! + b_i (\hat{p}_{{\rm a},i}(t) - p_{\rm a}(t))\! \!\right)\!\!,
\end{equation*}
\begin{equation}
\hat{p}_{{\rm a},i}(0) = 0,
\label{eq:12}
\end{equation}
where \( \hat{p}_{{\rm a},i} \) denotes the local approximation of the average desired power \( p_{\rm a}(t) \), and \( \kappa > 0 \) represents a design parameter.

Using these estimates, we refine the distributed power allocation strategy given in \eqref{eq:6}. The updated power allocation law given by
\begin{equation}
\!\!p_i(t) =
\begin{cases} 
\frac{x_{{\rm d},i}(t)}{\max{\left\{\! \frac{a_1}{2}, \hat{x}_{{\rm a},i}(t) \!\right\}}} \hat{p}_{{\rm a},i}(t), & \text{(discharging mode)}, \\
\frac{x_{{\rm c},i}(t)}{\max{\left\{ \!\frac{a_1}{2}, \hat{x}_{{\rm a},i}(t) \!\right\}}} \hat{p}_{{\rm a},i}(t), & \text{(charging mode)},
\end{cases}
\label{eq:14}
\end{equation}
where $a_1>0$ is given in \eqref{eq:5}. The term $\frac{a_1}{2}>0$ is added to ensure that the denominator does not become zero, making the power allocation law feasible to implement.

Before proceeding to discuss the results related to distributed power allocation algorithm \eqref{eq:14}, we first review the convergence  analysis of the estimators defined in \eqref{eq:10} and \eqref{eq:12}.

\begin{lemma}\label{lemma1}
\cite{kia2019tutorial} There exists a positive constant \( \gamma_{\rm s} > 0 \) such that, for every \(  \beta > 0 \), the estimate \( \hat{x}_{{\rm a},i}(t) \), generated by the estimator \eqref{eq:10}, converges exponentially to a neighborhood of \( x_{\rm a}(t) \), that is,  
\begin{align*}
\lim_{t \to \infty} \sup \left| \hat{x}_{{\rm a},i}(t) - x_{\rm a}(t) \right| &\leq \frac{\gamma_{\rm s}}{\beta \lambda_2}, \\
\sup_{\tau \in [t, \infty)} \left\| \left( I_N - \frac{1}{N} \mathbf{1}_N \mathbf{1}_N^{\rm T} \right) \dot{x}(\tau) \right\| &= \gamma_{\rm s} < \infty,
\end{align*}
where \( \lambda_2 \) represents the smallest positive eigenvalue of the Laplacian matrix \( L \).
\end{lemma}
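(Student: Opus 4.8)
The plan is to recast the estimator \eqref{eq:10} in stacked vector form and analyze the consensus disagreement dynamics against a scalar comparison inequality. Writing $\hat{x}_{\rm a}(t) = [\hat{x}_{{\rm a},1}(t),\dots,\hat{x}_{{\rm a},N}(t)]^{\rm T}$ and $x(t) = [x_1(t),\dots,x_N(t)]^{\rm T}$, the estimator compactly reads $\dot{\hat{x}}_{\rm a}(t) = \dot{x}(t) - \beta L\hat{x}_{\rm a}(t)$ with $\hat{x}_{\rm a}(0) = x(0)$, where $L$ is the graph Laplacian from the preliminaries.

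The key structural fact I would establish first is that the mean of the estimates reproduces the true average exactly for all $t$. Premultiplying the dynamics by $\frac{1}{N}\mathbf{1}_N^{\rm T}$ and using $\mathbf{1}_N^{\rm T}L = 0$ from Assumption~\ref{ass: graph} yields $\frac{d}{dt}\big(\frac{1}{N}\mathbf{1}_N^{\rm T}\hat{x}_{\rm a}(t)\big) = \frac{1}{N}\mathbf{1}_N^{\rm T}\dot{x}(t) = \dot{x}_{\rm a}(t)$; combined with the matched initialization $\frac{1}{N}\mathbf{1}_N^{\rm T}\hat{x}_{\rm a}(0) = x_{\rm a}(0)$, this gives $\frac{1}{N}\mathbf{1}_N^{\rm T}\hat{x}_{\rm a}(t) = x_{\rm a}(t)$ identically. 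This invariance is what reduces the claim to bounding only the disagreement about the average.

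Next I would introduce the disagreement vector $\delta(t) = \big(I_N - \frac{1}{N}\mathbf{1}_N\mathbf{1}_N^{\rm T}\big)\hat{x}_{\rm a}(t)$, whose $i$th component equals $\hat{x}_{{\rm a},i}(t) - x_{\rm a}(t)$ by the invariance above and which satisfies $\mathbf{1}_N^{\rm T}\delta(t) = 0$. With $P = I_N - \frac{1}{N}\mathbf{1}_N\mathbf{1}_N^{\rm T}$, and using $PL = L$ and $L\mathbf{1}_N = 0$, the dynamics collapse to $\dot{\delta}(t) = P\dot{x}(t) - \beta L\delta(t)$. Differentiating $\|\delta(t)\|$, applying Cauchy--Schwarz to the forcing term with $\|P\dot{x}(\tau)\|\le\gamma_{\rm s}$, and invoking the spectral bound $\delta^{\rm T}L\delta \geq \lambda_2\|\delta\|^2$ (valid precisely because $\mathbf{1}_N^{\rm T}\delta = 0$) produces the scalar inequality $\frac{d}{dt}\|\delta(t)\| \leq \gamma_{\rm s} - \beta\lambda_2\|\delta(t)\|$.

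Finally, the comparison lemma against the linear ODE $\dot{w} = \gamma_{\rm s} - \beta\lambda_2 w$ gives $\|\delta(t)\| \leq \frac{\gamma_{\rm s}}{\beta\lambda_2} + \big(\|\delta(0)\| - \frac{\gamma_{\rm s}}{\beta\lambda_2}\big)e^{-\beta\lambda_2 t}$, which exhibits exponential convergence to the neighborhood of radius $\frac{\gamma_{\rm s}}{\beta\lambda_2}$; since $|\hat{x}_{{\rm a},i}(t) - x_{\rm a}(t)| \leq \|\delta(t)\|$, the stated bound follows for every $\beta>0$. The main subtlety is not any individual computation but correctly exploiting the conservation of the mean so that the forcing $P\dot{x}$ acts only on the disagreement subspace, where $L$ is coercive; without the matched initialization $\hat{x}_{\rm a}(0) = x(0)$ this clean reduction breaks down and an extra mean-offset term would persist.
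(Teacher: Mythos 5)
Your proof is correct, but there is no proof in the paper to compare it against: Lemma~\ref{lemma1} is imported verbatim, by citation, from the dynamic average consensus tutorial \cite{kia2019tutorial}, and the paper only re-uses its conclusions (for instance inside the proof of Theorem~\ref{thm:2}, where the same bounds are invoked for the augmented Laplacian $L'$). Your argument is essentially the standard proof of that cited result, and its value here is that it is self-contained within the paper's own preliminaries: mean conservation under the matched initialization $\hat{x}_{\rm a}(0)=x(0)$ (via $\mathbf{1}_N^{\rm T}L=0$ from Assumption~\ref{ass: graph}), reduction to the disagreement vector $\delta=\left(I_N-\frac{1}{N}\mathbf{1}_N\mathbf{1}_N^{\rm T}\right)\hat{x}_{\rm a}$, the coercivity bound $\delta^{\rm T}L\delta\ge\lambda_2\|\delta\|^2$ valid exactly because $\mathbf{1}_N^{\rm T}\delta=0$, and a comparison-lemma estimate. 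In fact you prove slightly more than the lemma claims, namely an explicit exponential rate $\beta\lambda_2$ and an explicit transient term. Two minor points would tighten it. First, $\|\delta(t)\|$ is not differentiable at points where $\delta(t)=0$, so the scalar inequality $\frac{d}{dt}\|\delta\|\le\gamma_{\rm s}-\beta\lambda_2\|\delta\|$ should be read with the upper right Dini derivative, or derived from $V=\frac{1}{2}\|\delta\|^2$ via $\dot{V}\le\gamma_{\rm s}\|\delta\|-\beta\lambda_2\|\delta\|^2$; the comparison argument then goes through unchanged. Second, the finiteness of $\gamma_{\rm s}=\sup_{\tau}\left\|\left(I_N-\frac{1}{N}\mathbf{1}_N\mathbf{1}_N^{\rm T}\right)\dot{x}(\tau)\right\|$ is an assumption embedded in the lemma statement, treating $\dot{x}$ as an exogenous bounded signal; you are right to take it as given, since in the closed loop $\dot{x}$ depends on the estimates themselves and neither the lemma nor the paper resolves that circularity at this stage.
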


\begin{lemma}\label{lem:2}
\cite{7548310}: There exists a positive constant \( \gamma_{\rm p} > 0 \) such that, for every \( \kappa > 0 \), the estimate \( \hat{p}_{{\rm a},i}(t) \), generated by the estimator \eqref{eq:12}, converges exponentially to a neighborhood of \( p_{\rm a}(t) \), given by
\[
\lim_{t \to \infty} \sup \left| \hat{p}_{{\rm a},i}(t) - p_{\rm a}(t) \right| \leq \frac{\psi \gamma_{\rm p}}{\kappa}.
\]
\end{lemma}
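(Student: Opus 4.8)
The plan is to recast the estimator \eqref{eq:12} in stacked vector form and analyze the tracking error via a Lyapunov/comparison argument. Collecting the local estimates into $\hat{p}_{\rm a}(t) = [\hat{p}_{{\rm a},1}(t),\ldots,\hat{p}_{{\rm a},N}(t)]^{\rm T}$ and writing $B = \mathrm{diag}(b_1,\ldots,b_N)$, where $b_i = 1$ exactly when unit $i$ has access to $p_{\rm a}(t)$ (so $B \neq 0$ by Assumption~\ref{ass: knowledge}), equation \eqref{eq:12} becomes $\dot{\hat{p}}_{\rm a} = -\kappa\big(L\hat{p}_{\rm a} + B(\hat{p}_{\rm a} - p_{\rm a}\mathbf{1}_N)\big)$. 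Since $L\mathbf{1}_N = 0$, I would introduce the error $e(t) = \hat{p}_{\rm a}(t) - p_{\rm a}(t)\mathbf{1}_N$ and use $L\hat{p}_{\rm a} = Le$ to obtain the clean error dynamics $\dot{e}(t) = -\kappa(L+B)e(t) - \dot{p}_{\rm a}(t)\mathbf{1}_N$, in which the time variation of the reference enters only through the bounded forcing term $-\dot{p}_{\rm a}(t)\mathbf{1}_N$.

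The key structural step is to show that $H := L + B$ is symmetric positive definite. Symmetry is immediate, since $L$ is symmetric under Assumption~\ref{ass: graph} and $B$ is diagonal. Positive definiteness is the standard grounded-Laplacian fact: for a connected undirected graph with at least one pinned node, $L+B \succ 0$. I would justify this by noting that $x^{\rm T}(L+B)x = 0$ forces both $x^{\rm T}Lx = 0$ and $x^{\rm T}Bx = 0$; the former gives $x = c\mathbf{1}_N$ for a scalar $c$ by connectivity, and the latter then forces $c=0$ at the pinned node, whence $x = 0$. Set $\mu := \lambda_{\min}(H) > 0$. This is the part I expect to be the main obstacle, because the entire exponential rate hinges on $\mu > 0$, which is precisely where both Assumption~\ref{ass: graph} (connectivity) and Assumption~\ref{ass: knowledge} (at least one informed unit) become indispensable; without pinning one would only control $e$ on the disagreement subspace, as in Lemma~\ref{lemma1}.

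With $\mu > 0$ in hand, the estimate follows from a scalar comparison argument. Taking $V = \tfrac{1}{2}\|e\|^2$ yields $\dot{V} = -\kappa e^{\rm T}He - \dot{p}_{\rm a}\,e^{\rm T}\mathbf{1}_N \le -\kappa\mu\|e\|^2 + |\dot{p}_{\rm a}|\sqrt{N}\,\|e\|$, using $e^{\rm T}He \ge \mu\|e\|^2$ and the Cauchy--Schwarz bound $|e^{\rm T}\mathbf{1}_N| \le \sqrt{N}\,\|e\|$. By Assumption~\ref{ass: power} together with $p_{\rm a} = p^*/N$ we have $|\dot{p}_{\rm a}| \le \psi/N$, so wherever $\|e\| \neq 0$ the norm obeys $\tfrac{d}{dt}\|e\| \le -\kappa\mu\|e\| + \psi/\sqrt{N}$. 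Applying the comparison lemma to this linear differential inequality gives $\|e(t)\| \le e^{-\kappa\mu t}\|e(0)\| + \frac{\psi}{\sqrt{N}\,\kappa\mu}\big(1 - e^{-\kappa\mu t}\big)$, establishing exponential convergence to a neighborhood of radius $O(1/\kappa)$. Since $|\hat{p}_{{\rm a},i}(t) - p_{\rm a}(t)| \le \|e(t)\|$ for each $i$, taking $\limsup_{t\to\infty}$ and setting $\gamma_{\rm p} := 1/(\sqrt{N}\,\mu)$ produces exactly $\limsup_{t\to\infty}|\hat{p}_{{\rm a},i}(t) - p_{\rm a}(t)| \le \psi\gamma_{\rm p}/\kappa$, valid for every $\kappa > 0$, as claimed.
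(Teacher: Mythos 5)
Your proof is correct, but it is worth noting that the paper itself offers no proof of this lemma at all: the statement is quoted verbatim with a citation to an external reference, and the convergence property is simply invoked as a known result for leader-following (pinned) consensus tracking. What you have done is reconstruct, from scratch, the standard argument that such a reference would contain, and your reconstruction is sound. The stacked error dynamics $\dot{e} = -\kappa(L+B)e - \dot{p}_{\rm a}\mathbf{1}_N$ are derived correctly using $L\mathbf{1}_N = 0$; your identification of the grounded-Laplacian positive definiteness $L+B \succ 0$ as the crux is exactly right, and your justification (connectivity forces $x = c\mathbf{1}_N$ on the kernel of $L$, pinning then forces $c = 0$) is the standard and correct one — this is indeed precisely where Assumption~\ref{ass: graph} and Assumption~\ref{ass: knowledge} are both indispensable, a dependence the paper's bare citation leaves implicit. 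The Lyapunov comparison step is also correct, including the bookkeeping $|\dot{p}_{\rm a}| \le \psi/N$ (so that the bound is stated in terms of $\psi$, the rate bound on $p^*$ rather than on $p_{\rm a}$, matching the lemma), and your $\gamma_{\rm p} = 1/(\sqrt{N}\,\mu)$ is a constant independent of $\kappa$, as the lemma requires. Two minor technical remarks: first, $\|e(t)\|$ is not differentiable at $e = 0$, so the differential inequality $\tfrac{d}{dt}\|e\| \le -\kappa\mu\|e\| + \psi/\sqrt{N}$ should formally be stated for the upper right Dini derivative (or one can run the comparison on $V$ itself via $\dot{V} \le -2\kappa\mu V + (\psi/\sqrt{N})\sqrt{2V}$); second, your argument holds for arbitrary initial conditions and hence in particular for the paper's choice $\hat{p}_{{\rm a},i}(0) = 0$, so it is if anything slightly more general than what the paper needs. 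What your approach buys over the paper's is self-containedness and an explicit, computable constant $\gamma_{\rm p}$ in terms of the graph data; what the citation buys the paper is brevity and compatibility with whatever weaker hypotheses (e.g., directed graphs or switching topologies) the cited work may allow.
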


Based on Lemmas \ref{lemma1} and \ref{lem:2}, the following results on the distributed power allocation algorithm \eqref{eq:14} were established in \cite{meng2021distributed}.

\begin{theorem} \cite{meng2021distributed}: Given that the values of \( C_i \) and \( V_i \) are available, and conditions stated in assumptions~\ref{ass: graph}, \ref{ass: power}, and \ref{ass: knowledge} are met, the distributed power allocation law~\eqref{eq:14} successfully solves Problem~1. Specifically, for any predefined accuracy levels \( \epsilon_{\rm s}, \epsilon_{\rm p} > 0 \), there exist sufficiently large parameters \( \beta, \kappa > 0 \) such that, for all \( i \in \mathcal{N} \), both objectives of Problem~1 are achieved.
\end{theorem}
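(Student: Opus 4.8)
The plan is to establish the two objectives of Problem~1 by controlling the two sources of error separately: the estimation error of the average unit state and the estimation error of the average desired power, using Lemmas~\ref{lemma1} and \ref{lem:2} as the two pillars. I would work with the discharging mode (the charging mode follows by the symmetric definitions of $x_{{\rm c},i}$ and $\dot x_{{\rm c},i}$) and treat $x_{{\rm a}}(t)$ and $p_{{\rm a}}(t)$ as the ground-truth quantities to be tracked.

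\textbf{Step 1 (Power-tracking objective).} First I would sum the allocation law \eqref{eq:14} over $i$ to form $p_\Sigma(t)=\sum_{i=1}^N p_i(t)$, and compare it to $p^*(t)=N p_{\rm a}(t)$. The idea is to add and subtract the ideal terms, writing $p_i(t)=\frac{x_i}{\max\{a_1/2,\hat x_{{\rm a},i}\}}\hat p_{{\rm a},i}$ and decomposing the error of each summand into a part driven by $|\hat p_{{\rm a},i}-p_{\rm a}|$ and a part driven by $|\hat x_{{\rm a},i}-x_{\rm a}|$. The bounds \eqref{eq:5} on $x_i$ and the floor $a_1/2$ in the denominator keep all the coefficients bounded, so Lemma~\ref{lem:2} bounds the first part by $O(\psi\gamma_{\rm p}/\kappa)$ and Lemma~\ref{lemma1} bounds the second by $O(\gamma_{\rm s}/(\beta\lambda_2))$. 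Choosing $\kappa$ and $\beta$ large enough drives $\limsup_{t\to\infty}|p_\Sigma(t)-p^*(t)|\le\epsilon_{\rm p}$.

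\textbf{Step 2 (SoC-balancing objective).} Next I would show that, up to vanishing estimator error, the allocation law enforces $\dot x_i(t)\approx -\frac{x_i(t)}{x_{\rm a}(t)}p_{\rm a}(t)$, i.e. each unit discharges proportionally to its own stored energy. Consider the pairwise ratio or, more cleanly, the quantity $S_i(t)-S_j(t)$ and differentiate it using \eqref{eq:4} and \eqref{eq:14}. The ideal dynamics make $x_i/x_j$ (hence the SoC gap, via $x_{{\rm d},i}=C_iV_iS_i$) converge, so I expect a Lyapunov argument on the disagreement vector $(I_N-\frac1N\mathbf 1_N\mathbf 1_N^{\rm T})x(t)$ showing it contracts toward a residual set whose size is again controlled by the two estimator errors. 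Invoking the boundedness \eqref{eq:5} and Assumption~\ref{ass: power} to keep $p_{\rm a}(t)/x_{\rm a}(t)$ bounded away from zero and infinity, I would conclude $\limsup_{t\to\infty}|S_i(t)-S_j(t)|\le\epsilon_{\rm s}$ for $\beta,\kappa$ sufficiently large.

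\textbf{The main obstacle} will be the coupling between the two estimators and the nonlinear, time-varying nature of the closed loop: the true average $x_{\rm a}(t)$ itself evolves according to the allocation law, so Lemma~\ref{lemma1} only applies if $\dot x(t)$ stays bounded (the hypothesis $\gamma_{\rm s}<\infty$), which must be verified \emph{self-consistently} rather than assumed. I would handle this by first establishing an a~priori invariant region in which $x_i(t)\in[a_1,a_2]$ and $\hat x_{{\rm a},i}(t)$ stays near $x_{\rm a}(t)$, thereby bounding $\dot x_i(t)=-p_i(t)$ and closing the loop; the $\max\{a_1/2,\cdot\}$ safeguard in \eqref{eq:14} is exactly what prevents a finite-escape or division-by-zero failure while this bootstrap is carried out. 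Once boundedness is secured, the two error terms decouple asymptotically and the $\epsilon_{\rm s},\epsilon_{\rm p}$ accuracies follow from independently enlarging $\beta$ and $\kappa$.
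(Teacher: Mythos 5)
Your proposal is correct and takes essentially the same route as the paper: although the paper defers this particular theorem's proof to \cite{meng2021distributed}, its own proof of the privacy-preserving analog (the theorem following Theorem~\ref{thm:2}) is exactly your argument --- the two estimator-error pillars from Lemma~\ref{lemma1} and Lemma~\ref{lem:2} combined into a relative-error term, a pairwise Lyapunov function $W_{ij}=\frac{1}{2}(s_i(t)-s_j(t))^2$ driven by the proportional-discharge dynamics $\dot{s}_i=-r(t)(1+\xi_i(t))s_i(t)$ with $r(t)\geq p/(Na_2)>0$ for SoC balancing, and summation of the allocation law for power tracking, with $\beta,\kappa$ enlarged to meet $\epsilon_{\rm s},\epsilon_{\rm p}$. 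Your closing observation that $\gamma_{\rm s}<\infty$ must be verified self-consistently (since $\dot{x}_i=-p_i$ is itself determined by the closed loop, with the $\max\{a_1/2,\cdot\}$ floor preventing escape) is a genuine subtlety that the paper's argument glosses over rather than addresses.
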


\section{Privacy-Preserving Power Allocation Algorithm Design} \label{sec:privacy}

If eavesdroppers deduce private information via interception, the system’s security could be compromised. Disclosing this information may enable adversaries to launch attacks, increasing electricity generation costs, or even causing a power system outage.

In the networked battery system we are considering in this paper, each battery unit power $p_{i}(t)$ is private information that should not be leaked. Considering equation~\eqref{eq:mode}, we can conclude that $p_i(t)$ can be inferred if the attacker is able to infer $\dot{x}_i(t)$. We know that $x_i(t)$ (for discharging $x_{{\rm d},i} = C_iV_iS_i(t)$, for charging $x_{{\rm c},i} = C_iV_i(1-S_i(t))$) is also private information for each battery unit that should not be leaked. It is assumed that the eavesdropper is aware of the communication network and can observe all information transmitted among the battery units. Therefore, information known to the eavesdropper includes $\hat{x}_{{\rm a},i}$, $\hat{p}_{{\rm a},i}$, $A$, $\beta$, and $\kappa$. 

As discussed in Section \ref{sec:attack}, when dynamic average consensus is achieved using~\eqref{eq:10}, the time-varying internal states \( x_i(t) \) and their derivatives \( \dot{x}_i(t) \) can be inferred by the external eavesdropper through the use of an observer model, assuming that $x_i(t)$, $\dot{x}_i(t)$ and $\ddot{x}_i(t)$ are bounded. Therefore, we cannot guarantee the privacy of the networked battery system.

In the following subsections, we first introduce a privacy-preserving algorithm to estimate the scaled average unit state $\eta x_{{\rm a}}(t)$ of the networked battery system. Next, we discuss how the leader-following consensus algorithm in equation~\eqref{eq:12} is modified to preserve the privacy of the average desired power $p_{\rm a}(t)$. Finally, in the last subsection, these estimators are utilized to develop the power allocation algorithm.

\subsection{Privacy-Preserving Distributed Average Unit State Estimator }

In this section, motivated by the schemes in~\cite{zhang2022privacy, 8657789}, we decompose the distributed estimator state \( \hat{x}_{{\rm a},i} \), the state of the battery unit \( x_i \), and its derivative \( \dot{x}_i \) into two subsets: \( \left\{ \hat{x}_{{\rm a},i}^{\alpha}, x_{i}^{\alpha}, \dot{x}_{i}^{\alpha} \right\} \) and \( \left\{ \hat{x}_{{\rm a},i}^{\beta}, x_{i}^{\beta}, \dot{x}_{i}^{\beta} \right\} \). The initial values are randomly chosen as follows,
\begin{align*}
    x_{i}^{\alpha}(0) + x_{i}^{\beta}(0) = 2\eta x_i(0) \hspace{0pt},
\end{align*}
where \( \eta > 0 \) with $\eta\neq 1$ is a predefined scaling constant introduced as part of the privacy-preserving mechanism. Clearly, $\dot{x}_i^{\alpha}(t)$ and $\dot{x}_i^{\beta}(t)$ are bounded and meet the following constraint,
\begin{align*}
\dot{x}_{i}^{\alpha}(t) + \dot{x}_{i}^{\beta}(t) =2\eta \dot{x}_{i}(t).
\end{align*}

It follows that the sum of the decomposed states evolves consistently with the scaled original state trajectory, as given by
\begin{align*}
x_i^{\alpha}(t) + x_i^{\beta}(t) = 2\eta x_i(t),
\end{align*}
which implies that
\begin{align*}
\frac{1}{2N} \sum_{i=1}^{N} \left( x_i^{\alpha}(t) + x_i^{\beta}(t) \right) = \frac{\eta}{N} \sum_{i=1}^{N} x_i(t) = \eta x_{\rm a}(t).
\end{align*}
Thus, the consensus value converges to a scaled version of $x_{\rm a}(t)$. 
This scaling provides an additional layer of privacy, protecting the privacy of not only the individual states $x_i(t)$ and their derivatives $\dot{x}_i(t)$, but also the average state $x_{\rm a}(t)$.
 In Theorem~\ref{thm:2}, we will prove that sub-states $\hat{x}_{{\rm a},i}^{\alpha}(t)$ and $\hat{x}_{{\rm a},i}^{\beta}(t)$ convergence to a neighborhood of \( \eta x_{\rm a}(t) \).  

Under the decomposition scheme, the sub-state $\hat{x}_{{\rm a},i}^{\alpha}(t)$ serves as a substitute for the original state $\hat{x}_{{\rm a},i}(t)$. This includes responsibility for interaction among battery units and is the only state information of a battery unit $i$ that is communicated to its neighbors $j$. 

The sub-state $\hat{x}_{{\rm a},i}^{\beta}(t)$ is involved in the distributed update process by interacting with the internal sub-state $\hat{x}_{{\rm a},i}^{\alpha}(t)$ of agent $i$. As a result, $\hat{x}_{{\rm a},i}^{\beta}(t)$ influences the evolution of $\hat{x}_{{\rm a},i}^{\alpha}(t)$, while remaining hidden from neighbors of agent $i$. Therefore, only available data to the eavesdropper is $A, \beta$, and  $\hat{x}_{{\rm a},i}^{\alpha}(t) $.

Figure \ref{fig:fig2} presents a visual representation of state-decomposition in a network. After incorporating the decomposition, we reformulate the average unit state estimator \eqref{eq:10} as follows,
\begin{subequations}\label{decomp}
\begin{alignat}{3}
    \dot{\hat{x}}_{{\rm a},i}^{\alpha}(t) &= \dot{x_i}^{\alpha}(t) 
    - \beta \sum_{j=1}^{N} a_{ij} \left({\hat{x}}_{{\rm a},i}^{\alpha}(t) - {\hat{x}}_{{\rm a},j}^{\alpha}(t) \right) \notag \\
    &\quad - \beta \left( {\hat{x}}_{{\rm a},i}^{\alpha}(t) - {\hat{x}}_{{\rm a},i}^{\beta}(t) \right), \label{eq:15} \\
    \dot{\hat{x}}_{{\rm a},i}^{\beta}(t) &= \dot{x_i}^{\beta}(t) - \beta \left( {\hat{x}}_{{\rm a},i}^{\beta}(t) - {\hat{x}}_{{\rm a},i}^{\alpha}(t) \right), \label{eq:16}  \\
    \hat{x}_{{\rm a},i}^{\alpha}(0) &= x_i^{\alpha}(0), \quad \hat{x}_{{\rm a},i}^{\beta}(0) = x_i^{\beta}(0), \quad \ i \in \mathcal{N}. \label{eq:17}
\end{alignat}
\end{subequations}

\subsection{Privacy-Preserving  Distributed Average Desired Power Estimator}

We use \eqref{eq:12} to estimate a scaled version of the average desired power \( p_{\rm a}(t) \), thereby enhancing its privacy. In \eqref{eq:12}, the local estimate \( \hat{p}_{{\rm a},i}(t) \) is shared over the network and thus is accessible to an external eavesdropper. Since \( \hat{p}_{{\rm a},i}(t) \) converges to the true average desired power \( p_{\rm a}(t) \), this exposes sensitive information.

To mitigate this risk, we modify the estimator such that the consensus is performed with respect to a scaled version of the average desired power. Specially, the average desired power \( p_{\rm a}(t) \) is known by at least one designated battery unit, which injects a scaled version \( \sigma p_{\rm a}(t) \) into the network. All battery units know the scaling factor \( \sigma > 0 \) with $\sigma\neq 1$, but the eavesdropper does not. Therefore all other battery units can recover the true average desired power. The modified estimator is given by
\begin{equation} \label{eq:privacy desired power}
\dot{\hat{p}}_{{\rm a},i}(t)\!=\!-\kappa \!\left( \sum_{j=1}^{N} a_{ij} (\hat{p}_{{\rm a},i}(t) \!-\! \hat{p}_{{\rm a},j}(t))\! + b_i (\hat{p}_{{\rm a},i}(t) \!-\! \sigma p_{\rm a}(t))\! \!\right)\!\!.
\end{equation}

Through this scaling approach, the shared local estimates \( \hat{p}_{{\rm a},i}(t) \) converge to \( \sigma p_{\rm a}(t) \) rather than the \( p_{\rm a}(t) \). As a result, even though an eavesdropper can access all \( \hat{p}_{{\rm a},i}(t) \), it cannot infer the actual average desired power without the knowledge of \( \sigma \).

According to Lemma~\ref{lem:2}, 
\[
\lim_{t \to \infty} \sup \left| \hat{p}_{{\rm a},i}(t) - \sigma p_{\rm a}(t) \right| \leq \frac{\psi \gamma_{\rm p}}{\kappa}.
\]
This guarantees exponential convergence of the estimates to a neighborhood of the scaled version of the average desired power while preserving the privacy of the average desired power.
\begin{figure}[!t]
\centering
\includegraphics[width=1\linewidth]{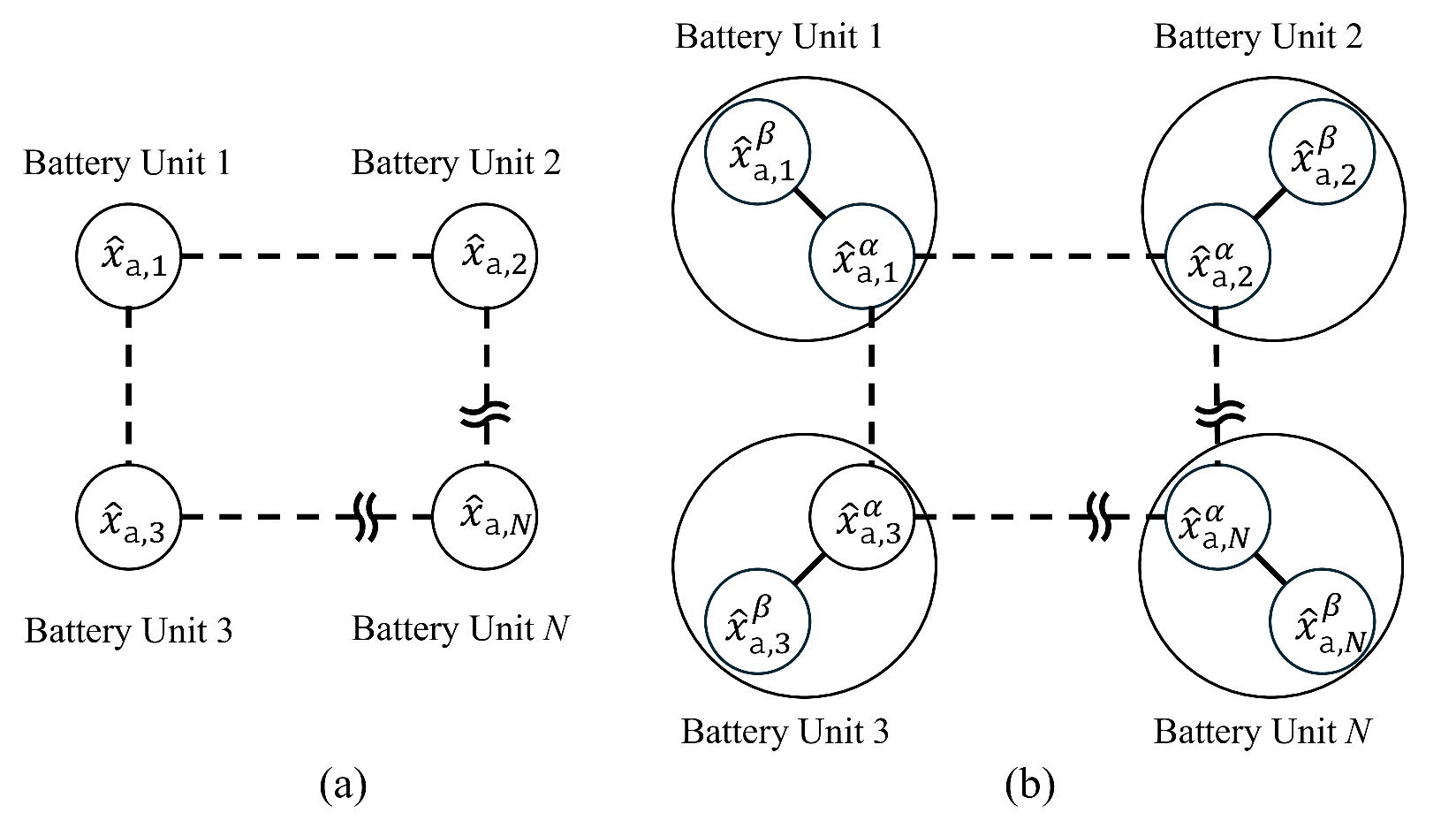}
\caption{Explanation of state-decomposition process: (a) Original state before decomposition. (b) Decomposed state. }
\label{fig:fig2}
\end{figure}

\begin{assumption}
\label{ass:secure_init}
All communication links in the network are encrypted during the initialization phase of the algorithm, enabling the scaling parameters $\eta$ and $\sigma$ to be securely shared among all battery units. 
These parameters are known to all battery units but are inaccessible to any external eavesdropper. Moreover, the scaling parameters $\eta$ and $\sigma$ are reinitialized each time the algorithm is executed and can also be refreshed following network reconfiguration events (e.g., unit addition or removal), thereby preventing long-term inference and further enhancing privacy protection.
\end{assumption}

\subsection{Power Allocation Algorithm Based on Privacy-Preserving Estimators}

With the privacy-preserving estimators, it is ensured that \( \hat{x}_{{\rm a},i}^{\alpha}(t) \) converges to a neighborhood of \( \eta x_{\rm a}(t) \), and \( \hat{p}_{{\rm a},i}(t) \) converges to a neighborhood of \( \sigma p_{\rm a}(t) \). Since the scaling factors are known to all battery units within the system, the original average state \( x_{\rm a}(t) \) and the desired average power \( p_{\rm a}(t) \) can be recovered by dividing the respective estimates by \( \eta \) and \( \sigma \).

Therefore, the power allocation algorithm in \eqref{eq:14} for the battery unit \( i \) is modified to incorporate privacy-preserving estimators as follows,
\begin{equation}
\!\!p_i(t) =
\begin{cases} 
\frac{x_{{\rm d},i}(t)}{\max{\left\{\!\frac{a_1}{2}, \frac{\hat{x}_{{\rm a},i}^{\alpha}(t)}{\eta} \!\right\}}} \frac{\hat{p}_{{\rm a},i}(t)}{\sigma}, & \text{(discharging mode)}, \\
\frac{x_{{\rm c},i}(t)}{\max{\left\{\! \frac{a_1}{2}, \frac{\hat{x}_{{\rm a},i}^{\alpha}(t)}{\eta}\! \right\}}} \frac{\hat{p}_{{\rm a},i}(t)}{\sigma}, & \text{(charging mode)}.
\end{cases}
\label{eq:18}
\end{equation}

\begin{theorem}\label{thm:2}
Consider the proposed decomposition framework in \eqref{eq:15}, \eqref{eq:16}, and \eqref{eq:17}. Let assumption~\ref{ass: graph} hold. The tracking deviations 
\( \hat{x}_{{\rm a},i}^{\alpha}(t) - \frac{\eta}{n} \sum_{j=1}^{N} x_j(t) \) and 
\( \hat{x}_{{\rm a},i}^{\beta}(t) - \frac{\eta}{n} \sum_{j=1}^{N} x_j(t) \) remain ultimately bounded, and the bounds on them can be made arbitrarily small by tuning the value of the design parameter $\beta$.
\end{theorem}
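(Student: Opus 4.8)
The plan is to stack the $2N$ sub-states into a single vector and recognize that the decomposed estimator \eqref{decomp} is itself a dynamic average consensus scheme running over an augmented graph. Collecting $\hat{x}^{\alpha}=[\hat{x}_{{\rm a},1}^{\alpha},\dots,\hat{x}_{{\rm a},N}^{\alpha}]^{\rm T}$ and $\hat{x}^{\beta}=[\hat{x}_{{\rm a},1}^{\beta},\dots,\hat{x}_{{\rm a},N}^{\beta}]^{\rm T}$ into $\zeta=[(\hat{x}^{\alpha})^{\rm T},(\hat{x}^{\beta})^{\rm T}]^{\rm T}\in\mathbb{R}^{2N}$ and likewise $u=[(x^{\alpha})^{\rm T},(x^{\beta})^{\rm T}]^{\rm T}$, the dynamics \eqref{eq:15}--\eqref{eq:16} take the compact form $\dot{\zeta}=\dot{u}-\beta\mathcal{L}\zeta$, where
\[
\mathcal{L}=\begin{bmatrix} L+I_N & -I_N \\ -I_N & I_N \end{bmatrix}.
\]
The first step is to observe that $\mathcal{L}$ is exactly the Laplacian of the graph obtained from $\mathcal{G}$ by attaching to each node $i$ a private companion node (the $\beta$-copy) linked only to $i$. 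Under Assumption~\ref{ass: graph} this augmented graph is undirected and connected, so $\mathcal{L}$ is symmetric positive semidefinite with a simple zero eigenvalue, $\mathcal{L}\mathbf{1}_{2N}=0$, and a strictly positive algebraic connectivity $\tilde{\lambda}_2>0$; these are the only spectral facts the remainder of the argument needs.

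Next I would establish the conservation identity that pins down the consensus value. Left-multiplying $\dot{\zeta}=\dot{u}-\beta\mathcal{L}\zeta$ by $\mathbf{1}_{2N}^{\rm T}$ and using $\mathbf{1}_{2N}^{\rm T}\mathcal{L}=0$ gives $\tfrac{d}{dt}(\mathbf{1}_{2N}^{\rm T}\zeta)=\mathbf{1}_{2N}^{\rm T}\dot{u}=\sum_{i}(\dot{x}_i^{\alpha}+\dot{x}_i^{\beta})=2\eta\sum_{i}\dot{x}_i$. Since the initial conditions \eqref{eq:17} together with $x_i^{\alpha}(0)+x_i^{\beta}(0)=2\eta x_i(0)$ yield $\mathbf{1}_{2N}^{\rm T}\zeta(0)=2\eta N x_{\rm a}(0)$, integrating shows $\tfrac{1}{2N}\mathbf{1}_{2N}^{\rm T}\zeta(t)=\eta x_{\rm a}(t)$ for all $t$. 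Thus the mean of the $2N$ sub-states equals the target $\eta x_{\rm a}(t)$ exactly, and it remains only to control the disagreement about this mean.

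With the consensus value identified, I would introduce the disagreement vector $e=\zeta-\eta x_{\rm a}(t)\mathbf{1}_{2N}=(I_{2N}-\tfrac{1}{2N}\mathbf{1}_{2N}\mathbf{1}_{2N}^{\rm T})\zeta$, whose $2N$ components are precisely the tracking deviations $\hat{x}_{{\rm a},i}^{\alpha}-\eta x_{\rm a}$ and $\hat{x}_{{\rm a},i}^{\beta}-\eta x_{\rm a}$ claimed in the theorem. Because $\mathcal{L}\mathbf{1}_{2N}=0$, one finds $\dot{e}=P\dot{u}-\beta\mathcal{L}e$ with $P=I_{2N}-\tfrac{1}{2N}\mathbf{1}_{2N}\mathbf{1}_{2N}^{\rm T}$ and $e\perp\mathbf{1}_{2N}$. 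Taking $V=\tfrac12\norm{e}^2$ and using $e^{\rm T}\mathcal{L}e\geq\tilde{\lambda}_2\norm{e}^2$ yields $\dot{V}\leq\gamma\norm{e}-\beta\tilde{\lambda}_2\norm{e}^2$, where $\gamma=\sup_t\norm{P\dot{u}(t)}<\infty$ because the decomposed derivatives $\dot{x}_i^{\alpha},\dot{x}_i^{\beta}$ (and hence $\dot{x}_{\rm a}$) are bounded. Consequently $\dot{V}<0$ whenever $\norm{e}>\gamma/(\beta\tilde{\lambda}_2)$, so $\norm{e}$ is ultimately bounded by $\gamma/(\beta\tilde{\lambda}_2)$; since each individual deviation is dominated by $\norm{e}$, both families of deviations inherit this bound, which shrinks to zero as $\beta\to\infty$.

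The main obstacle I anticipate is the first structural step: verifying that $\mathcal{L}$ inherits a single zero eigenvalue and a positive spectral gap from $\mathcal{G}$. This is where the specific coupling pattern of the decomposition matters — each $\beta$-copy must be tethered to its host node so that connectivity of $\mathcal{G}$ propagates to the augmented graph and $\tilde{\lambda}_2>0$ holds. Once this is secured, the remainder is a routine comparison-lemma estimate in the same spirit as Lemma~\ref{lemma1}; a minor bookkeeping point is to express $\gamma$ explicitly through the bounds on $\dot{x}_i^{\alpha},\dot{x}_i^{\beta}$ and $\dot{x}_{\rm a}$ so that the final bound is manifestly tunable via $\beta$.
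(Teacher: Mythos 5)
Your proposal is correct and follows essentially the same route as the paper: you stack the sub-states into a $2N$-vector, identify the same augmented Laplacian $L'=\begin{bmatrix} L+I_N & -I_N \\ -I_N & I_N\end{bmatrix}$, use the conservation property $x_i^{\alpha}(t)+x_i^{\beta}(t)=2\eta x_i(t)$ to pin the consensus value at $\eta x_{\rm a}(t)$, and obtain an ultimate bound of the form $\gamma/(\beta\lambda_2')$ that vanishes as $\beta\to\infty$. The one substantive difference is that where the paper disposes of the key spectral step by remarking that $L'$ ``remains symmetric'' and then citing Lemma~\ref{lemma1}, you prove it inline: you recognize $L'$ as the Laplacian of the augmented graph in which each node is tethered to a private companion node, deduce from connectedness of $\mathcal{G}$ that this augmented graph is connected (hence $L'$ has a simple zero eigenvalue and $\tilde{\lambda}_2>0$), and then run the standard Lyapunov/comparison argument on the disagreement vector $e=P\zeta$. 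This is a genuine improvement in rigor rather than a different method --- symmetry of $L'$ alone would not justify invoking the dynamic-average-consensus bound; it is precisely the connectivity of the augmented graph (equivalently, the simplicity of the zero eigenvalue of $L'$) that makes the bound $\gamma/(\beta\tilde{\lambda}_2)$ valid, and your proof supplies this step explicitly while the paper leaves it implicit.
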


\begin{proof}
The average state estimator given in \eqref{eq:10} can be represented in matrix form as:  
\begin{equation}
    \dot{\hat{x}}_{\rm a}(t) = -\beta L \hat{x}_{\rm a}(t) + \dot{x}(t), \quad \hat{x}_{{\rm a},i} (0) = x_i(0).
    \label{eq:19}
\end{equation}

As shown in \cite{kia2019tutorial}, for a given undirected connected graph, there exists a positive constant \( \gamma_{\rm s} > 0 \) such that, for every \( \beta > 0 \), the estimate \( \hat{x}_{{\rm a},i}(t) \) produced by the estimator in~\eqref{eq:10} converges exponentially to a bounded region around \( x_a(t) \), i.e.,  
\[
\lim_{t \to \infty} \sup \left| \hat{x}_{{\rm a},i}(t) - x_{\rm a}(t) \right| \leq \frac{\gamma_{\rm s}}{\beta \lambda_2},
\]

\[
\sup_{\tau \in [t, \infty)} \left\| \left( I_N - \frac{1}{N} \mathbf{1}_N \mathbf{1}_N^{\rm T} \right) \dot{x}(\tau) \right\| = \gamma_{\rm s} < \infty,
\]
where \( \lambda_2 \) denotes the smallest positive eigenvalue of the Laplacian matrix \( L \), which characterizes the connectivity of the graph.  

Consequently, \eqref{eq:15} and \eqref{eq:16} can be reformulated in matrix form, analogous to \eqref{eq:19}, but incorporating a modified Laplacian matrix to account for the decomposition-induced graph structure: 
\begin{equation}
    \dot{\hat{x}}_{\rm a}^{\text{decomp}} = -\beta L' \hat{x}_{\rm a}^{\text{decomp}} + \dot{x}^{\text{decomp}}, 
\end{equation}
    \label{eq:20}
where 
\begin{align*}
\hat{x}_{\rm a}^{\text{decomp}} &=
\begin{bmatrix}
    \hat{x}_{{\rm a},1}^{\alpha}(t) \; \dots \; \hat{x}_{{\rm a},N}^{\alpha}(t) \; \; \hat{x}_{{\rm a},1}^{\beta}(t) \; \dots \; \hat{x}_{{\rm a},N}^{\beta}(t)
\end{bmatrix}^{\rm T}, \\
x^{\text{decomp}} &=
\begin{bmatrix}
    x_{1}^{\alpha}(t) \; \dots \; x_{N}^{\alpha}(t) \;\; x_{1}^{\beta}(t) \; \dots \; x_{N}^{\beta}(t)
\end{bmatrix}^{\rm T}.
\end{align*}
The modified decomposition-based Laplacian matrix is defined as 
\[
L' =
\begin{bmatrix}
L + I_N & -I_N \\
-I_N & I_N
\end{bmatrix},
\]
where \( I_N \in \mathbb{R}^{N \times N} \) denotes the identity matrix of dimension \( N\). 



    

It is noted that the new Laplacian matrix \( L' \) remains symmetric, as \( L \) corresponds to the Laplacian of an undirected connected graph. Consequently, by Lemma~\ref{lemma1}, dynamic average consensus can be achieved.

The convergence error is given by
\begin{equation}\label{eq:21}
\lim_{t \to \infty} \sup \left| \hat{x}_{{\rm a},i}^{\alpha}(t) - 
\frac{1}{2N} \sum_{i=1}^{N} \left( x_i^{\alpha}(t) + x_i^{\beta}(t) \right) 
\right| \leq \frac{\gamma_{\rm s}'}{\beta {\lambda}_2^{'}},
 \end{equation}
\begin{equation}\label{eq:22}
\lim_{t \to \infty} \sup \left| \hat{x}_{{\rm a},i}^{\beta}(t) - 
\frac{1}{2N} \sum_{i=1}^{N} \left( x_i^{\alpha}(t) + x_i^{\beta}(t) \right) 
\right| \leq \frac{\gamma_{\rm s}'}{\beta {\lambda}_2^{'}},
\end{equation}
\begin{equation}
\sup_{\tau \in [t, \infty)} \left\| \left( I_{2N} - \frac{1}{2N} \mathbf{1}_{2N} \mathbf{1}_{2N}^{\rm T} \right) \dot{x}^{\rm decomp}(\tau) \right\| = \gamma_{\rm s}' < \infty,
\end{equation}
where $\lambda_2^{'}$ represents the smallest positive eigenvalue of the Laplacian matrix $L'$ and $\gamma_{\rm s}'>0$ is a scalar constant.

Given the initial conditions,
\[
x_i^{\alpha}(0) + x_i^{\beta}(0) = 2\eta x_i(0),
\]
and considering that \( \dot{x}_i^{\alpha}(t) \) and \(  \dot{x}_i^{\beta}(t) \) are bounded and selected to satisfy
\[
\dot{x}_i^{\alpha}(t) + \dot{x}_i^{\beta}(t) = 2\eta\dot{x}_i(t),
\]
we have
\[
x_i^{\alpha}(t) + x_i^{\beta}(t) = 2\eta x_i(t),
\]
which implies that
\[
\frac{1}{2N} \sum_{i=1}^{N} \left( x_i^{\alpha}(t) + x_i^{\beta}(t) \right) = \frac{\eta}{N} \sum_{i=1}^{N} x_i(t) = \eta x_{\rm a}(t).
\]

Thus, \eqref{eq:21} and \eqref{eq:22} can be simplified as
\[
\lim_{t \to \infty} \sup \left| \hat{x}_{{\rm a},i}^{\alpha}(t) - 
 \eta x_{\rm a}(t)
\right| \leq \frac{\gamma_{\rm s}'}{\beta {\lambda}_2^{'}}, 
\]
\[
\lim_{t \to \infty} \sup \left| \hat{x}_{{\rm a},i}^{\beta}(t) - 
 \eta x_{\rm a}(t) 
\right| \leq \frac{\gamma_{\rm s}'}{\beta {\lambda}_2^{'}}.
\]

Therefore, all sub-states \( \hat{x}^{\alpha}_{{\rm a},i}(t) \) and \( \hat{x}^{\beta}_{{\rm a},i}(t) \) in \eqref{eq:15} and \eqref{eq:16} converge to a neighborhood of a scaled version of the average consensus value corresponding to the original states, and the bounds on the steady-state errors can be made arbitrarily small by tuning the value of $\beta$.
\end{proof}

\begin{remark}\label{remark1}
The second smallest eigenvalue of the Laplacian matrix of a graph characterizes the convergence rate of consensus algorithms \cite{4118472}. Since introducing the parameter $\eta$ in the state-decomposition mechanism does not alter the Laplacian, the convergence rate remains identical to that of the conventional decomposition case. Furthermore, by Theorem~\ref{thm:2}, it is straightforward to see that the tracking errors are ultimately bounded in the same manner as in the conventional state-decomposition case.
\end{remark}

\begin{theorem}
Let the BESS be composed of \( N \) interconnected battery units with known values of \( C_i \) and \( V_i \), and assume that the conditions specified in Assumptions~\ref{ass: graph}, \ref{ass: power}, and~\ref{ass: knowledge} are satisfied. The distributed power allocation law~\eqref{eq:18}, combined with the privacy-preserving estimators~\eqref{eq:15},~\eqref{eq:16}, and~\eqref{eq:privacy desired power}, solves Problem~1. Specifically, for any predefined accuracy levels \( \epsilon_{\rm s}, \epsilon_{\rm p} > 0 \), there exist sufficiently large parameters \( \beta, \kappa > 0 \) such that, for all \( i \in \mathcal{N} \), both objectives of Problem~1 are achieved.

\end{theorem}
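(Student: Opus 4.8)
The plan is to reduce the privacy-preserving closed loop to the non-privacy-preserving one analyzed in the Theorem attributed to \cite{meng2021distributed}, by showing that the \emph{de-scaled} estimates appearing in \eqref{eq:18} obey exactly the convergence guarantees that Lemmas~\ref{lemma1} and~\ref{lem:2} provide for the original estimators \eqref{eq:10} and \eqref{eq:12}. Concretely, define $\tilde{x}_{{\rm a},i}(t) := \hat{x}_{{\rm a},i}^{\alpha}(t)/\eta$ and $\tilde{p}_{{\rm a},i}(t) := \hat{p}_{{\rm a},i}(t)/\sigma$. By Theorem~\ref{thm:2}, $|\hat{x}_{{\rm a},i}^{\alpha}(t) - \eta x_{\rm a}(t)|$ is ultimately bounded by $\gamma_{\rm s}'/(\beta\lambda_2')$, so $\limsup_{t\to\infty}|\tilde{x}_{{\rm a},i}(t) - x_{\rm a}(t)| \le \gamma_{\rm s}'/(\eta\beta\lambda_2')$; similarly, applying Lemma~\ref{lem:2} to the scaled estimator \eqref{eq:privacy desired power} gives $\limsup_{t\to\infty}|\tilde{p}_{{\rm a},i}(t) - p_{\rm a}(t)| \le \psi\gamma_{\rm p}/(\sigma\kappa)$. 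Both bounds vanish as $\beta,\kappa\to\infty$, matching the roles played by $\gamma_{\rm s}/(\beta\lambda_2)$ and $\psi\gamma_{\rm p}/\kappa$ in the original analysis.

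Next I would observe that, rewritten in terms of these de-scaled quantities, the allocation law \eqref{eq:18} is \emph{identical in form} to the non-private law \eqref{eq:14}, with $\hat{x}_{{\rm a},i}$ and $\hat{p}_{{\rm a},i}$ replaced by $\tilde{x}_{{\rm a},i}$ and $\tilde{p}_{{\rm a},i}$. Since the proof of the Theorem of \cite{meng2021distributed} uses the estimators only through the convergence bounds of Lemmas~\ref{lemma1} and~\ref{lem:2} and through the structural form of the clipped proportional law, its conclusion carries over verbatim under this substitution: for any $\epsilon_{\rm s},\epsilon_{\rm p}>0$ there exist $\beta,\kappa$ large enough that both objectives of Problem~1 hold.

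To make the power-tracking part self-contained I would then give a short error-propagation argument. Using \eqref{eq:5}, $x_{\rm a}(t) \ge a_1 > a_1/2$, so for $\beta$ large enough $\tilde{x}_{{\rm a},i}(t) > a_1/2$ in steady state and the $\max\{\cdot\}$ clipping is inactive, whereupon the law reduces to $p_i = x_i \tilde{p}_{{\rm a},i}/\tilde{x}_{{\rm a},i}$. When $\tilde{x}_{{\rm a},i}=x_{\rm a}$ and $\tilde{p}_{{\rm a},i}=p_{\rm a}$, the total power equals $\sum_{i=1}^{N} (x_i/x_{\rm a})\,p_{\rm a} = p^*$; one then bounds $|p_\Sigma(t)-p^*(t)|$ by the estimation errors, using $a_1\le x_i\le a_2$ from \eqref{eq:5} and $p\le|p^*|\le\bar{p}$ from Assumption~\ref{ass: power}, obtaining a tracking error of order $\gamma_{\rm s}'/(\eta\beta\lambda_2') + \psi\gamma_{\rm p}/(\sigma\kappa)$, which can be driven below $\epsilon_{\rm p}$. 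The SoC-balancing objective follows by the same mechanism as in \cite{meng2021distributed}: in each mode the de-scaled proportional law equalizes the per-unit discharge/charge rates, driving $|S_i-S_j|$ into an $\epsilon_{\rm s}$-neighborhood.

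The main obstacle I anticipate is the feedback interconnection between the estimators and the plant: the bounds of Theorem~\ref{thm:2} and Lemma~\ref{lem:2} presuppose that $\dot{x}_i(t)$ (equivalently $p_i(t)$) is uniformly bounded, yet $p_i(t)$ is itself generated from the estimates through \eqref{eq:18}. The crux is therefore to close this loop without circularity. I would argue that $\tilde{p}_{{\rm a},i}$ is bounded because \eqref{eq:privacy desired power} is a stable leader-following consensus driven by the bounded input $\sigma p_{\rm a}(t)$ (Assumption~\ref{ass: power}), that the denominator in \eqref{eq:18} is bounded below by $a_1/2>0$ by construction, and that $x_i\le a_2$ by \eqref{eq:5}; these together give a uniform bound on $p_i(t)=\mp\dot{x}_i(t)$ independent of the estimator states, which legitimizes the finiteness of $\gamma_{\rm s}'$ and the hypotheses of Lemma~\ref{lem:2}. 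Establishing this boundedness bootstrap rigorously, rather than the subsequent algebraic error estimates, is where the real work lies.
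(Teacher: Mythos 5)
Your proposal is correct and takes essentially the same route as the paper: the paper's own proof likewise de-scales the privacy-preserving estimates (its error variables $e_{x,i}(t) = \hat{x}_{{\rm a},i}^{\alpha}(t)/\eta - x_{\rm a}(t)$ and $e_{p,i}(t) = \hat{p}_{{\rm a},i}(t)/\sigma - p_{\rm a}(t)$ are exactly your $\tilde{x}_{{\rm a},i} - x_{\rm a}$ and $\tilde{p}_{{\rm a},i} - p_{\rm a}$), invokes Theorem~\ref{thm:2} and Lemma~\ref{lem:2} for the same bounds, verifies that the clipping $\max\{a_1/2,\cdot\}$ is inactive in steady state, and then carries out explicitly the error-propagation you defer to \cite{meng2021distributed}, namely writing $p_i = r(t)(1+\xi_i(t))x_{{\rm d},i}(t)$ and using the pairwise function $W_{ij} = \tfrac{1}{2}(s_i - s_j)^2$ to drive $|s_i - s_j|$ below $(\xi^+ - \xi^-)/(1+\xi^-) \le \epsilon_{\rm s}$ and to bound $(1+\xi^-)p^* \le p_\Sigma \le (1+\xi^+)p^*$. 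Your final point about the estimator--plant feedback loop is one the paper does not explicitly address, and your proposed resolution (the clipped denominator plus boundedness of the leader-following power estimator yield a uniform bound on $p_i(t)$ independent of estimator accuracy, which legitimizes $\gamma_{\rm s}' < \infty$) is sound and would in fact tighten the paper's argument.
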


\begin{proof} 

We analyze the discharge phase and note that the charging phase follows similarly.

We define the estimation errors as
\begin{align*}
    e_{{x},i}(t) &= \frac{\hat{x}_{{\rm a},i}^\alpha(t)}{\eta} - x_{\rm a}(t), \quad e_{{p},i}(t) =\frac{\hat{p}_{{\rm a},i}(t)}{\sigma} - p_{\rm a}(t).
\end{align*}

Selecting \( \beta \geq \frac{2\gamma_{\rm s}'}{a_1 \lambda_2'} \), where \( \gamma_{\rm s}' \) is defined in Theorem~\ref{thm:2}, we use Theorem~\ref{thm:2} to conclude
\begin{equation*}
    \lim_{t \to \infty} \sup |e_{{x},i}(t)| < \frac{a_1}{2\eta}.
\end{equation*}
Since \( x_{\rm a}(t) \geq a_1 \) for all \( t \geq 0 \), at steady state, 
\begin{equation*}
    \hat{x}_{{\rm a},i}^\alpha(t) \geq a_1\eta - \frac{a_1}{2}.
\end{equation*}

By Lemma~\ref{lem:2}, \( e_{{\rm p},i}(t) \) converges exponentially to zero, leading to
\begin{equation*}
    \lim_{t \to \infty} \sup |e_{p,i}(t)| \leq \frac{\psi \gamma_{\rm p}}{\sigma\kappa}.
\end{equation*}
Thus, the relative error in \( p_{\rm a} \) satisfies
\begin{equation*}
    \lim_{t \to \infty} \sup \frac{|e_{{p},i}(t)|}{p_{\rm a}(t)} \leq \frac{N \psi \gamma_{\rm p}}{\sigma\kappa p}.
\end{equation*}

The steady-state discharging power corresponding to the \( i \)th unit is
\begin{equation*}
    p_i(t) = \frac{x_{{\rm d},i}(t)}{x_{\rm a}(t) + e_{{x},i}(t)} (p_{\rm a}(t) + e_{{p},i}(t)).
\end{equation*}
Rearranging, we obtain
\begin{equation*}
    p_i(t) = \frac{\left(1 + \frac{e_{{p},i}(t)}{p_{\rm a}(t)}\right) x_{{\rm d},i}(t)}{\left(1 + \frac{e_{{x},i}(t)}{x_{\rm a}(t)}\right)x_{\rm a}(t)} p_{\rm a}(t).
\end{equation*}
In steady state, it follows that
\begin{equation*}
    \frac{1 - \frac{N \psi \gamma_{\rm p}}{\sigma\kappa p}}{1 + \frac{\gamma_{\rm s}'}{a_1 \eta\beta \lambda_2'}} \leq \frac{1 + \frac{e_{{p},i}(t)}{p_{\rm a}(t)}}{1 + \frac{e_{{x},i}(t)}{x_{\rm a}(t)}} \leq   \frac{1 + \frac{N \psi \gamma_{\rm p}}{\sigma\kappa p}}{1 - \frac{\gamma_{\rm s}'}{a_1\eta \beta \lambda_2'}}
\end{equation*}

We define a new deviation term
\begin{equation*}
    \xi_i(t) = \frac{1 + \frac{e_{{p},i}(t)}{p_{\rm a}(t)}}{1 + \frac{e_{{x},i}(t)}{x_{\rm a}(t)}} - 1.
\end{equation*}
Bounding \( \xi_i \),
\begin{align*}
    \xi^- &= \frac{1 - \frac{N \psi \gamma_{\rm p}}{\sigma\kappa p}}{1 + \frac{\gamma_{\rm s}'}{a_1\eta\beta \lambda_2'}} - 1, \quad 
    \xi^+ = \frac{1 + \frac{N \psi \gamma_{\rm p}}{\sigma\kappa p}}{1 - \frac{\gamma_{\rm s}'}{a_1\eta\beta \lambda_2'}} - 1
\end{align*}
Thus, in steady state,
\begin{equation*}
    \xi^- \leq \xi_i(t) \leq \xi^+.
\end{equation*}
Choosing sufficiently large $\beta$ and $\kappa$ minimizes these bounds.

Define
\begin{equation*}
    r(t) = \frac{p_{{\rm a}}(t)}{x_{\rm a}(t)}.
\end{equation*}
Expressing $p_i(t)$ as a function of $r(t)$ and $\xi_i(t)$, we have
\begin{equation*}
    p_i(t) = r (1 + \xi_i(t)) {x}_{{\rm d},i}(t).
\end{equation*}
Since $r(t) \geq \frac{p}{Na_2} > 0$ since  $x_{\rm a}(t) \leq a_2$. Also, $\dot{s}_i(t) =- \frac{1}{C_i V_i} p_i(t)$ and $x_{{\rm d},i}(t)=C_iV_is_i$(t), it follows

\begin{equation*}
    \dot{s}_i(t) = -r(t)(1 + \xi_i(t)) s_i(t).
\end{equation*}
Consider the function
\begin{equation*}
    W_{ij} = \frac{1}{2} (s_i(t) - s_j(t))^2.
\end{equation*}
Its derivative is
\begin{equation*}
    \dot{W}_{ij} = -r(t)(s_i(t) - s_j(t))((1 + \xi_i(t))s_i(t) - (1 + \xi_j(t))s_j(t)).
\end{equation*}
We observe that $\dot{W}_{ij} < 0$ if
\begin{equation*}
    (s_i(t) - s_j(t))((1 + \xi_i(t))s_i(t) - (1 + \xi_j(t))s_j(t)) > 0.
\end{equation*}
This condition holds when
\[
  \begin{aligned}
    & s_i(t) > s_j(t) \quad \text{and} \quad  \frac{s_i(t)}{s_j(t)} > \frac{1 + \xi_j(t)}{1 + \xi_i(t)}, \\
    & \quad \text{or} \quad s_i(t)< s_j(t) \quad \text{and} \quad  \frac{s_i(t)}{s_j(t)} < \frac{1 + \xi_j(t)}{1 + \xi_i(t)}.
  \end{aligned}
\]
Thus, in steady state, \begin{equation*}
\xi^{-} \leq \xi_i(t) \leq \xi^{+}.
\end{equation*}
\begin{equation*}
    \frac{1 + \xi^-}{1 + \xi^+} \leq \frac{s_i(t)}{s_j(t)} \leq \frac{1 + \xi^+}{1 + \xi^-}.
\end{equation*}
Ensuring convergence to the set
\begin{equation*}
    \left\{(s_i, s_j) : |s_i(t) - s_j(t)| \leq \frac{\xi^+ - \xi^-}{1 + \xi^-} \right\}.
\end{equation*}
Choosing $\beta, \kappa$ such that
\begin{equation*}
    \frac{\xi^+ - \xi^-}{1 + \xi^-} \leq \varepsilon_{\rm s},
\end{equation*}
ensures SoC balancing among battery units with any predefined level of accuracy. The total power output is determined by
\begin{equation*}
    p_{\Sigma}(t) = \sum_{i=1}^{N} \frac{(1 + \frac{e_{{p},i}(t)}{p_{\rm a}(t)}) x_{{\rm d},i}(t)}{(1+\frac{e_{{x},i}(t)}{x_{\rm a}(t)})x_{\rm a}(t)} p_{\rm a}(t).
\end{equation*}
Bounding the output power
\begin{equation*}
    (1 + \xi^-) p^*(t) \leq p_{\Sigma}(t) \leq (1 + \xi^+) p^*(t).
\end{equation*}
Choosing $\beta, \kappa$ such that
\begin{equation*}
    \xi^+ \leq \varepsilon_{\rm p},
\end{equation*}
ensures the delivery of power with any predefined accuracy while achieving SoC balancing among battery units with any predefined level of accuracy. Therefore, control objectives are achieved.
\end{proof}

\begin{theorem}\label{thm:4}
Under the modified state-decomposition mechanism in \eqref{decomp}, 
an external eavesdropper cannot infer the internal state $x_p(t)$ 
\emph{and its derivative} $\dot x_p(t)$ of any battery unit $p$ with guaranteed accuracy.
\end{theorem}

\begin{proof}
Under the decomposition scheme, each unit signal $\{x_i(t),\dot x_i(t)\}$ is divided into
two sub-signals $\{x_i^\alpha(t),\dot x_i^\alpha(t)\}$ and $\{x_i^\beta(t),\dot x_i^\beta(t)\}$.
Moreover, under the estimator \eqref{decomp}, 
only $\hat x^\alpha_{{\rm a},i}(t)$ is exchanged.
Therefore, the information accessible to an external eavesdropper is
\[
I_{\mathrm{dec}}(t)\triangleq\{A,\beta,\hat x^\alpha_{{\rm a},i}(t)\}_{i\in\mathcal N}.
\]
Let
$\{\bar x_i(t),\dot{\bar x}_i(t),\bar x_i^\alpha(t),\dot{\bar x}_i^\alpha(t),
\bar x_i^\beta(t),\dot{\bar x}_i^\beta(t)\}$ 
and
$\{\hat{\bar x}^\alpha_{{\rm a},i}(t),\hat{\bar x}^\beta_{{\rm a},i}(t)\}$
be another set of signals and their corresponding estimator states, respectively.
To show that the privacy of $x_p(t)$ and $\dot x_p(t)$ can be protected against the eavesdropper,
it suffices to show that under any bounded perturbations
\[
\epsilon_p(t)=\epsilon_p(0)+\int_0^t\delta_p(\tau)\,{\rm d}\tau
\]
that alter the private signals to
\[
\bar x_p(t)=x_p(t)+\epsilon_p(t),\qquad \dot{\bar x}_p(t)=\dot x_p(t)+\delta_p(t),
\]
there exists a new constructed set of estimator variables such that
\[
\bar I_{\mathrm{dec}}(t)\triangleq\{A,\beta,\hat{\bar x}^\alpha_{{\rm a},i}(t)\}_{i\in\mathcal N}
\]
is exactly the same as $I_{\mathrm{dec}}(t)$.
This is because the only information available to the eavesdropper is $I_{\mathrm{dec}}(t)$, and if
$I_{\mathrm{dec}}(t)$ could be the outcome of any perturbed values of $x_p(t)$ and $\dot x_p(t)$,
then the eavesdropper has no way to even find a range for $x_p(t)$ and $\dot x_p(t)$.

Construct another set of private signals with the same average.
Pick any $l\in\mathcal N\setminus\{p\}$.
Define
\begin{align}
\bar x_p(t) &= x_p(t)+\epsilon_p(t), &
\dot{\bar x}_p(t) &= \dot x_p(t)+\delta_p(t), \label{eq:bar_p}\\
\bar x_l(t) &= x_l(t)-\epsilon_p(t), &
\dot{\bar x}_l(t) &= \dot x_l(t)-\delta_p(t), \label{eq:bar_l}\\
\bar x_q(t) &= x_q(t), &
\dot{\bar x}_q(t) &= \dot x_q(t), \; \forall q\in\mathcal N\setminus\{p,l\}. \label{eq:bar_others}
\end{align}
Then, $\sum_{j=1}^N\bar x_j(t)=\sum_{j=1}^N x_j(t)$ for all $t$, so the average signal is unchanged.

Let the new set of private signals satisfy the same constraints as the original scheme,
\begin{equation}
\bar x_i^\alpha(0)+\bar x_i^\beta(0)=2\eta\,\bar x_i(0),\;
\dot{\bar x}_i^\alpha(t)+\dot{\bar x}_i^\beta(t)=2\eta\,\dot{\bar x}_i(t),\; \forall i\in\mathcal N.
\label{eq:bar_decomp_constraints}
\end{equation}

Consider the estimator \eqref{decomp} 
driven by new set of decomposed derivatives,
\begin{align}
\dot{\hat{\bar x}}^\alpha_{\rm a,i}(t)
&= \dot{\bar x}^\alpha_i(t)
-\beta\sum_{j=1}^N a_{ij}\big(\hat{\bar x}^\alpha_{{\rm a},i}(t)-\hat{\bar x}^\alpha_{{\rm a},j}(t)\big)
-\beta\big(\hat{\bar x}^\alpha_{{\rm a},i}(t)-\hat{\bar x}^\beta_{{\rm a},i}(t)\big), \label{eq:bar_est_alpha}\\
\dot{\hat{\bar x}}^\beta_{{\rm a},i}(t)
&= \dot{\bar x}^\beta_i(t)
-\beta\big(\hat{\bar x}^\beta_{{\rm a},i}(t)-\hat{\bar x}^\alpha_{{\rm a},i}(t)\big), \label{eq:bar_est_beta}\\
\hat{\bar x}^\alpha_{{\rm a},i}(0)&=\bar x_i^\alpha(0),\;
\hat{\bar x}^\beta_{{\rm a},i}(0)=\bar x_i^\beta(0),\; i\in\mathcal N. \label{eq:bar_est_init}
\end{align}
Choose the initial conditions as follows,
\[
\bar x_i^\alpha(0)=x_i^\alpha(0),\; \forall i\in\mathcal N,
\]
and using the decomposition constraint \eqref{eq:bar_decomp_constraints}
under the perturbed signals defined in \eqref{eq:bar_p}--\eqref{eq:bar_others} results in
\begin{align}
\bar x_p^\beta(0) &= x_p^\beta(0) + 2\eta\,\epsilon_p(0), \label{eq:bar_init_p}\\
\bar x_l^\beta(0) &= x_l^\beta(0) - 2\eta\,\epsilon_p(0), \label{eq:bar_init_l}\\
\bar x_q^\beta(0) &= x_q^\beta(0), \; \forall q\in\mathcal N\setminus\{p,l\}. \label{eq:bar_init_others}
\end{align}

Construct derivatives of the new decomposed signal that keep the $\alpha$ channel unchanged.
Define
\begin{align}
\dot{\bar x}_i^\alpha(t) &= \dot x_i^\alpha(t), \; \forall i\in\mathcal N, \label{eq:bar_xalpha_dot_all}\\
\dot{\bar x}_p^\beta(t) &= \dot x_p^\beta(t)+2\eta\,\delta_p(t), \label{eq:bar_xbeta_dot_p}\\
\dot{\bar x}_l^\beta(t) &= \dot x_l^\beta(t)-2\eta\,\delta_p(t), \label{eq:bar_xbeta_dot_l}\\
\dot{\bar x}_q^\beta(t) &= \dot x_q^\beta(t),\; \forall q\in\mathcal N\setminus\{p,l\}. \label{eq:bar_xbeta_dot_others}
\end{align}
Then, for $i=p$,
\[
\dot{\bar x}_p^\alpha(t)+\dot{\bar x}_p^\beta(t)
= \dot x_p^\alpha(t)+\dot x_p^\beta(t)+2\eta\delta_p(t)
= 2\eta \dot x_p(t)+2\eta\delta_p(t)
= 2\eta \dot{\bar x}_p(t),
\]
and similarly for $i=l$ and $i=q\notin\{p,l\}$, so \eqref{eq:bar_decomp_constraints} holds.

We claim that the following is a solution of \eqref{eq:bar_est_alpha}--\eqref{eq:bar_est_init},
\begin{align}
\hat{\bar x}^\alpha_{{\rm a},i}(t) &= \hat x^\alpha_{{\rm a},i}(t), \; \forall i\in\mathcal N, \label{eq:bar_sol_alpha}\\
\hat{\bar x}^\beta_{{\rm a},p}(t) &= \hat x^\beta_{{\rm a},p}(t)+2\eta\,\epsilon_p(t), \label{eq:bar_sol_beta_p}\\
\hat{\bar x}^\beta_{{\rm a},l}(t) &= \hat x^\beta_{{\rm a},l}(t)-2\eta\,\epsilon_p(t), \label{eq:bar_sol_beta_l}\\
\hat{\bar x}^\beta_{{\rm a},q}(t) &= \hat x^\beta_{{\rm a},q}(t),\; \forall q\in\mathcal N\setminus\{p,l\}. \label{eq:bar_sol_beta_others}
\end{align}
Indeed, differentiating \eqref{eq:bar_sol_beta_p}--\eqref{eq:bar_sol_beta_l} and using
$\dot\epsilon_p(t)=\delta_p(t)$ gives
\[
\dot{\hat{{\bar x}}}^\beta_{{\rm a},p}(t)=\dot{\hat x}^\beta_{{\rm a},p}(t)+2\eta\delta_p(t),
\;
\dot{\hat{\bar x}}^\beta_{{\rm a},l}(t)=\dot{\hat x}^\beta_{{\rm a},l}(t)-2\eta\delta_p(t).
\]
Substituting \eqref{eq:bar_xalpha_dot_all}--\eqref{eq:bar_xbeta_dot_others} and
\eqref{eq:bar_sol_alpha}--\eqref{eq:bar_sol_beta_others} into the dynamics
\eqref{eq:bar_est_alpha}--\eqref{eq:bar_est_beta}, we verify that all equations are satisfied
(and the initial conditions follow from \eqref{eq:bar_init_p}--\eqref{eq:bar_init_others}).

Therefore, the communicated sub-state satisfies
\[
\hat{\bar x}^\alpha_{{\rm a},i}(t)=\hat x^\alpha_{{\rm a},i}(t),\; \forall i\in\mathcal N,\ \forall t\ge 0,
\]
which implies $\bar I_{\mathrm{dec}}(t)=I_{\mathrm{dec}}(t)$.
Since $\epsilon_p(\cdot)$ and $\delta_p(\cdot)$ can be any bounded signals,
the eavesdropper cannot infer $x_p(t)$ and $\dot x_p(t)$ with guaranteed accuracy.
\end{proof}

\begin{remark}\label{remark2}
Under the proposed privacy-preserving control laws, an external eavesdropper with the knowledge of the communication network and all transmitted information (i.e., \( \hat{x}_{{\rm a},i} \), \( \hat{p}_{{\rm a},i} \), \( A \), \( \beta \), and \( \kappa \)) cannot accurately infer the state \( x_i(t) \) or its derivative \( \dot{x}_i(t) \) for any battery unit \( i \), or the average unit state \( x_{\rm a}(t) \). As a result, the eavesdropper is also unable to determine either the power delivery \( p_i(t) \) or its average \( p_{\rm a}(t) \) with guaranteed accuracy.

 Although the decomposition mechanism in~\cite{zhang2022privacy} prevents the reconstruction of \( x_i(t) \) and \( \dot{x}_i(t) \), the average unit state \( x_{\rm a}(t) \) can still be inferred. Motivated by this observation, the modified decomposition mechanism proposed in this paper enhances privacy by ensuring that the sub-state \( \hat{x}^{\alpha}_{{\rm a},i}(t) \) in~\eqref{eq:15} converges toward \( \eta x_{\rm a}(t) \), instead of $x_{\rm a}(t)$, thereby preventing inference of \( x_{\rm a}(t) \) by the eavesdropper in the absence of the knowledge of $\eta$. Similarly, the average desired power estimator in~\eqref{eq:privacy desired power} converges toward \( \sigma p_{\rm a}(t) \). Thus, without the knowledge of $\sigma$, the eavesdropper cannot infer $p_{\rm a}(t)$ either.

\end{remark}

\subsection{Robustness of Privacy Guarantees}

Under Assumption~\ref{ass:secure_init}, the external eavesdropper has no access to the scaling parameters $\eta$ and $\sigma$. We also consider rare scenarios in which partial information leakage may occur. If an external eavesdropper were able to obtain the scaling parameter $\eta$, the proposed privacy-preserving dynamic average consensus algorithm in \eqref{decomp} would reduce to the conventional state-decomposition-based dynamic average consensus algorithm. Even in this case, the privacy of the internal states $x_i(t)$ and $\dot{x}_i(t)$, and hence the individual power outputs $p_i(t)$, would remain protected against external eavesdropper attacks, as discussed in~\cite{zhang2022privacy}. However, knowledge of $\eta$ would allow the eavesdropper to infer the network average state $x_{\rm a}(t)$, which is further discussed in Section~5.3 for the special case $\eta = 1$.
Similarly, if the external eavesdropper were to obtain the scaling parameter $\sigma$, then according to~\eqref{eq:privacy desired power}, the average desired power $p_{\rm a}(t)$ could be inferred. Importantly, even under such partial information leakage, any privacy breach is limited to aggregate quantities such as $x_{\rm a}(t)$ or $p_{\rm a}(t)$, while the privacy of individual internal states remains preserved.

\subsection{Implementation Considerations and Practical Feasibility}

We discuss the computational and communication overhead associated with the proposed privacy-preserving mechanisms. For the privacy-preserving dynamic average consensus algorithm \eqref{decomp}, a modified state-decomposition approach is employed, which differs from the method in Reference~[18] only by the inclusion of an additional scaling factor $\eta$. This modification results in only a negligible increase in local computation, since the additional operations introduced by $\eta$ are simple scalar scalings with $\mathcal{O}(1)$ complexity per update. The overall per-agent computational complexity remains dominated by  neighbor-to-neighbor terms and therefore scales linearly with the number of neighboring units, i.e., $\mathcal{O}(|\mathcal N_i|)$.

 The communication overhead remains unchanged compared to the algorithm in (1) and state-decomposition algorithm in reference [18] since the same amount of information is exchanged among neighboring agents. Consequently, this modified state-decomposition method has a low computational overhead compared with cryptography-based privacy-preserving average consensus schemes \cite{ruan2019secure,chen2022privacy}, which incur significant computational and communication overhead due to repeated encryption and decryption operations \cite{chen2024new}.

For the privacy-preserving leader-following algorithm \eqref{eq:privacy desired power} that is used to estimate the average desired power, privacy is achieved through a simple scaling of the leader’s shared signal. This operation requires only scalar multiplications at both the leader and follower agents and therefore introduces only a negligible additional computational overhead. That is, the scaling operation introduces an additional computational cost of one scalar multiplication per agent per update, corresponding to $\mathcal{O}(1)$ complexity, while the communication overhead remains unchanged.

 As stated in Assumption~\ref{ass:secure_init}, the scaling parameters $\eta$ and $\sigma$ are securely distributed using encryption only during the initialization phase, and no cryptographic processing is required during normal operation. The proposed framework relies solely on lightweight arithmetic operations during runtime and is well suited for real-time implementation.

 Measurement noise in the local state $x_i(t)$ may affect the computed power command $p_i(t)$ through the power allocation law in~(18), since $p_i(t)$ depends on $x_i(t)$. However, the proposed algorithm does not compute the state derivative $\dot{x}_i(t)$ by numerically differentiating noisy measurements. Instead, the state derivative is obtained directly from the system model as
$
\dot{x}_i(t) = -p_i(t)\text{ for discharging}$ and 
$\dot{x}_i(t) = p_i(t)\text{ for charging}$.
Therefore, measurement noise enters the closed-loop system through $x_i(t)$ but is not amplified by numerical differentiation. Consequently, bounded measurement noise leads to bounded effects on system behavior. The proposed algorithm does not explicitly incorporate noise filtering. A detailed analysis and experimental evaluation under high measurement noise is left for future work.

\section{Simulation Results}\label{sec:simulation}

This subsection provides a performance evaluation of our privacy-protected algorithm for controlling a networked BESS. The simulation was performed on a BESS comprising a network of six battery units.  

The battery unit parameters are assigned as
\begin{flalign*}
    & (C_1, C_2, C_3, C_4, C_5, C_6) = (180, 190, 200, 210, 220, 230)~\text{Ah}, &\\
    & (V_1, V_2, V_3, V_4, V_5, V_6) = (50, 50, 50, 50, 50, 50)~\text{V}. &
\end{flalign*}

The communication topology is modeled as an undirected and connected graph, as illustrated in Figure~\ref{fig:fig3}. Let the desired total power \( p^{*}(t) \) be available only to battery unit 1. This implies that \( b_1 = 1 \) and \( b_i = 0 \) for \( i = 2, 3, 4, 5, 6 \).
\begin{figure}[!t]
\centering
\includegraphics[width=0.4\linewidth]{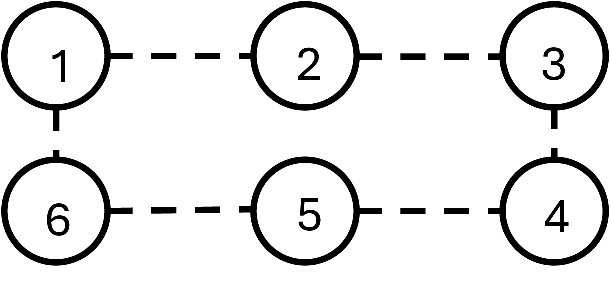}
\caption{The communication topology of the BESS.}
\label{fig:fig3}
\end{figure}
The privacy-preserving distributed control law for power allocation is given in~\eqref{eq:18} with the corresponding estimators defined in~\eqref{eq:15},~\eqref{eq:16}, and~\eqref{eq:privacy desired power}. We set the initial conditions of the estimators as \( \hat{x}_{{\rm a},i}^{\alpha}(0) = x_i^{\alpha}(0) \) and \( \hat{x}_{{\rm a},i}^{\beta}(0) = x_i^{\beta}(0) \), such that \( x_i^{\alpha}(0) \) and \( x_i^{\beta}(0) \) are randomly chosen to satisfy \( x_i^{\alpha}(0) + x_i^{\beta}(0) = 2\eta x_i(0) \). In addition, $\hat{p}_{a,i}(0) = 0$. In the privacy-preserving control framework, the design parameters and predefined scaling factors are specified as $\beta = 300$, $\kappa = 210$, $\eta = 3$, and $\sigma = 4$.

\subsection{Discharging Mode Performance}

In the discharge mode, the desired total power is specified as \( p^{*}(t) = 4200\sin(t) + 4200 \)~W, and the initial SoC values for the six battery units are \( (0.96, 0.89, 0.75, 0.8, 0.73, 0.88) \).

\begin{figure}[!t]
\centering
\includegraphics[width=4.8in, height=1.6in]{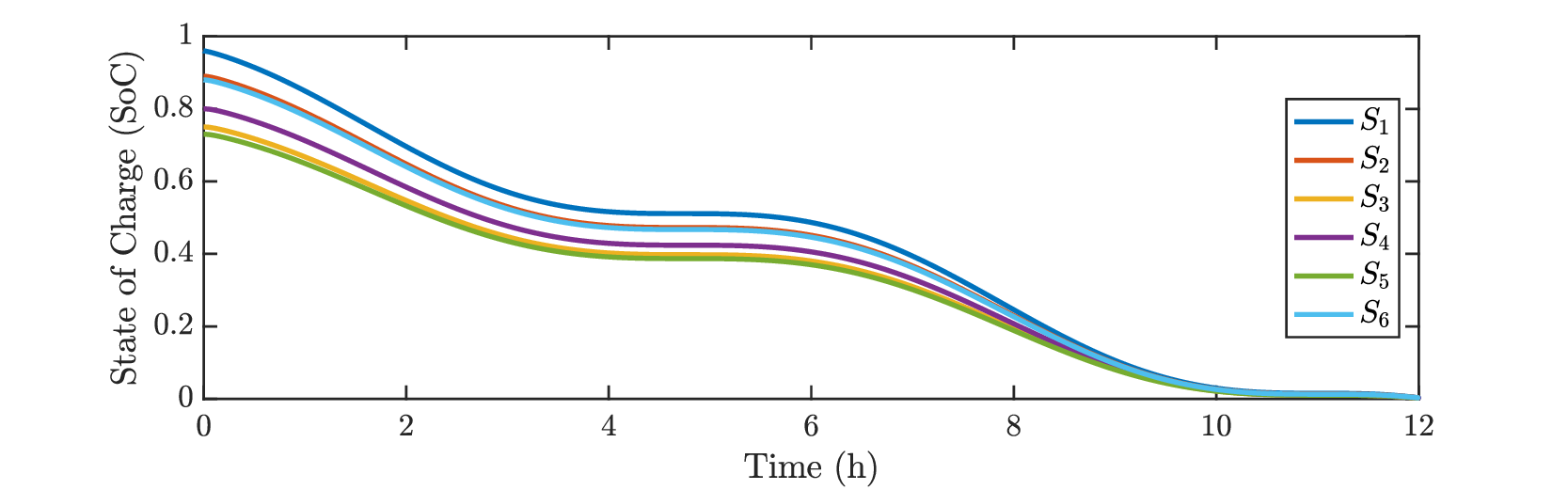}
\caption{The state of charge of each unit over time.}
\label{fig:fig4}
\end{figure}

\begin{figure}[!t]
\centering
\includegraphics[width=4.8in, height=1.6in]{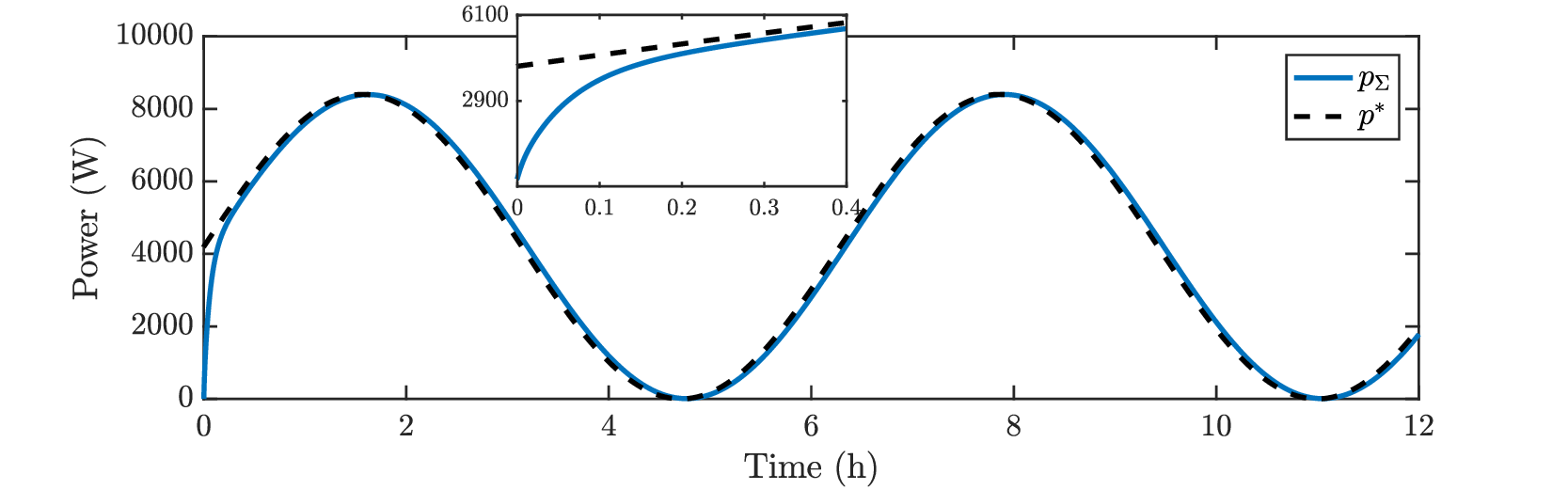}
\caption{The desired power and the total discharging power.}
\label{fig:fig5}
\end{figure}

\begin{figure}[!t]
\centering
\includegraphics[width=4.8in, height=1.6in]{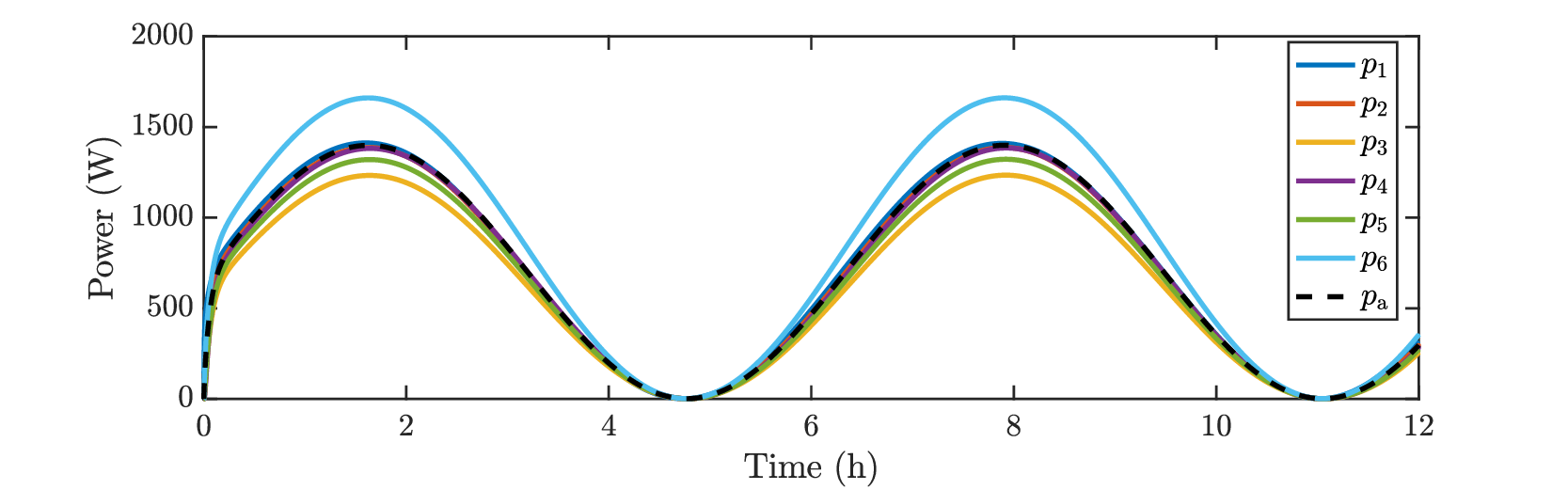}
\caption{The output power delivered by each battery.}
\label{fig:fig6}
\end{figure}

\begin{figure}[!t]
\centering
\includegraphics[width=4.8in, height=1.6in]{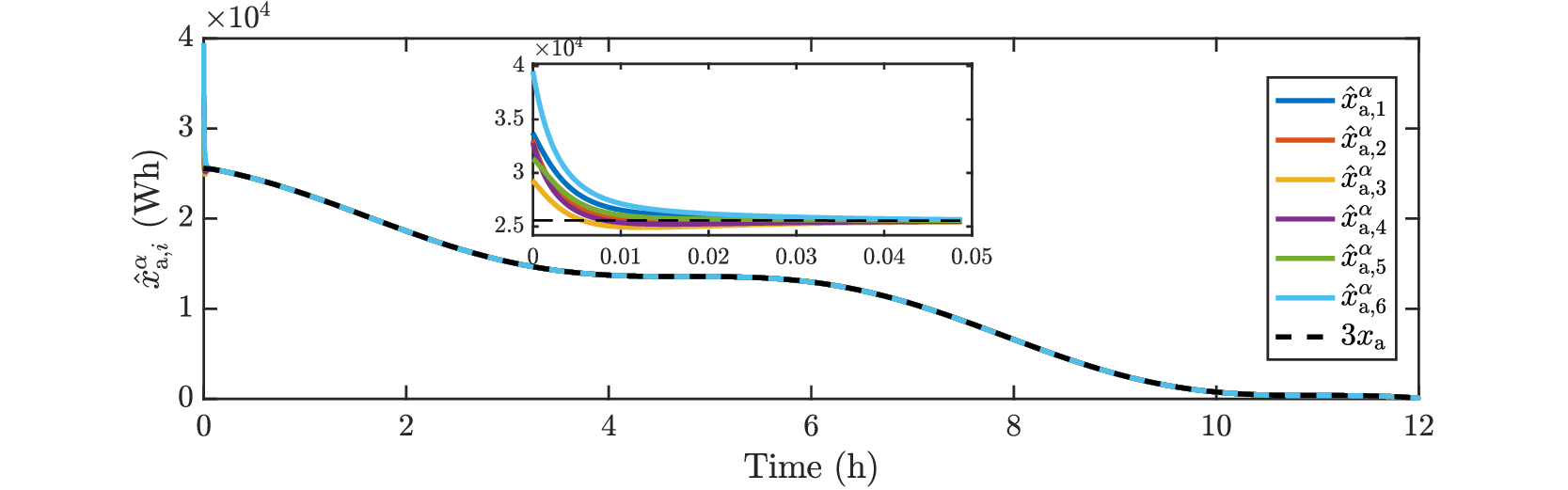}
\caption{The sub-state $\hat{x}_{{\rm a},i}^{\alpha}$ convergence to the scaled average unit state $\eta x_{\rm a}$.}
\label{fig:fig7}
\end{figure}

\begin{figure}[!t]
\centering
\includegraphics[width=4.8in, height=1.6in]{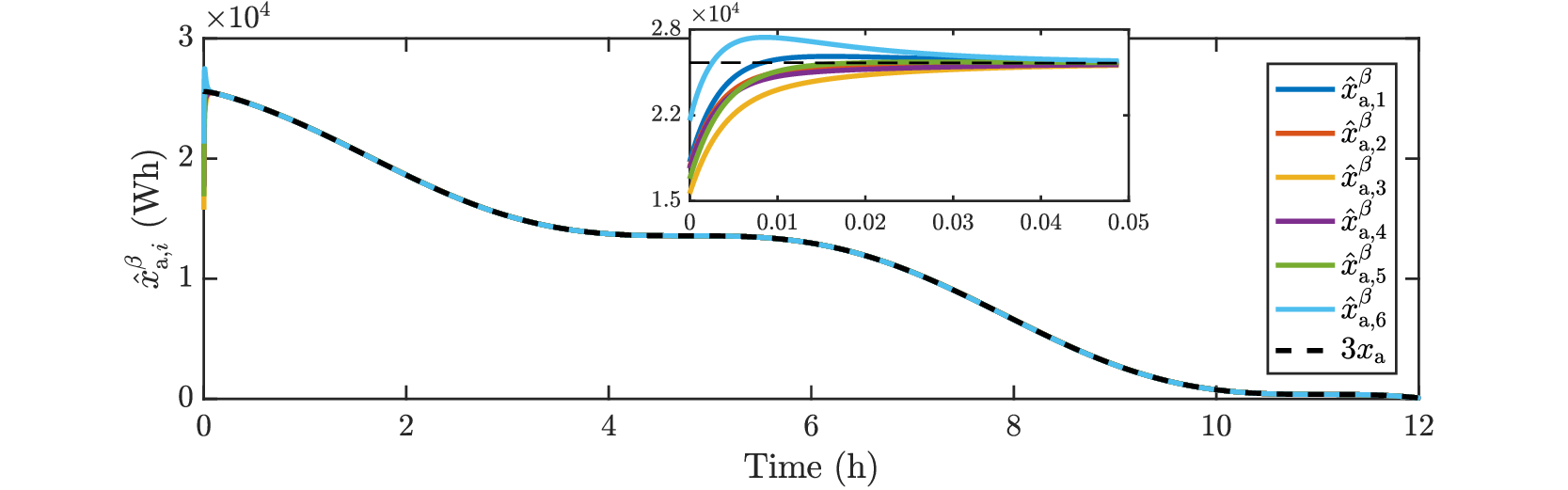}
\caption{The sub-state $\hat{x}_{{\rm a},i}^{\beta}$ convergence to the scaled average unit state $\eta x_{\rm a}$.}
\label{fig:fig8}
\end{figure}

\begin{figure}[!t]
\centering
\includegraphics[width=4.8in, height=1.6in]{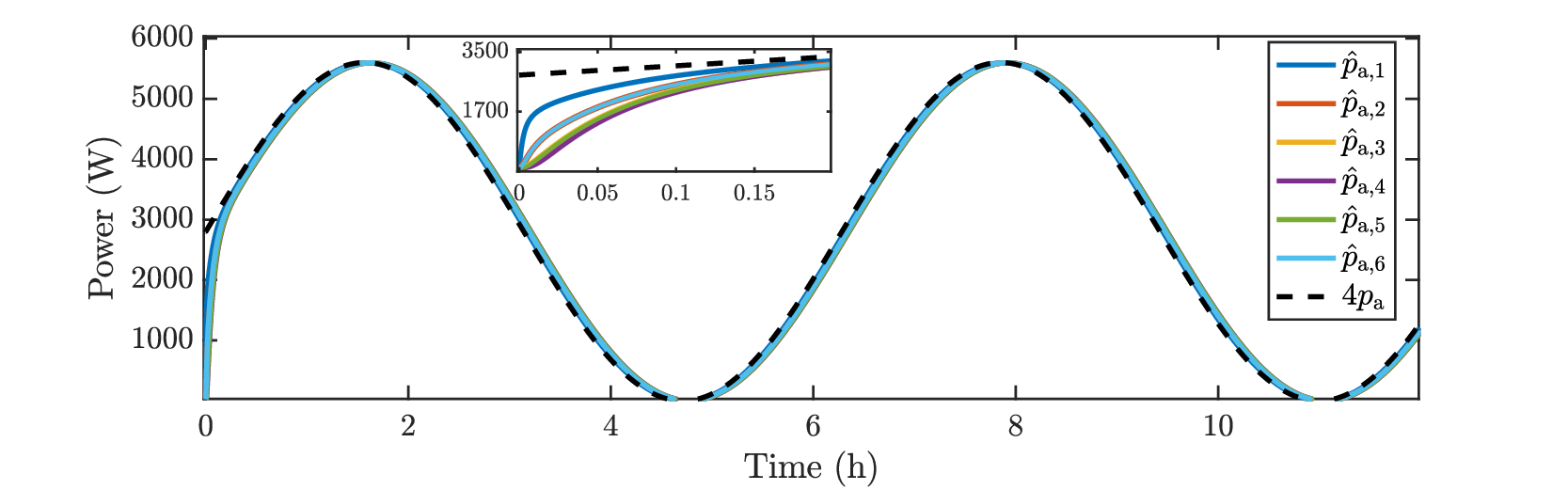}
\caption{The estimation of the scaled average desired power $\sigma p_{\rm a}$.}
\label{fig:fig9}
\end{figure}

Note that a positive value of \( p_{\Sigma}(t) \) signifies discharging operation of the BESS, whereas a negative value corresponds to charging. Figures~\ref{fig:fig4} through~\ref{fig:fig9} show the results of simulations in the discharging mode using the proposed distributed power allocation law with privacy-preserving estimators. The key results are summarized as follows.
\begin{itemize}
    \item Figure~\ref{fig:fig4} shows that, starting from heterogeneous initial SoC values,
    all battery units converge to a common SoC trajectory over time. This demonstrates
    that the proposed distributed control law achieves SoC balancing among
    the battery units.

    \item Figure~\ref{fig:fig5} confirms that the total output power $p_{\Sigma}(t)$ closely
    tracks the desired total power $p^{*}(t)$ throughout the discharging
    process. This indicates that the global power demand is accurately satisfied while
    maintaining balanced energy utilization across the BESS.

    \item Figure~\ref{fig:fig6} illustrates the resulting power allocation $p_i(t)$ for each
    battery unit. The figure shows that the total power demand is distributed among the
    units in a coordinated manner based on their internal states, while no individual
    unit is required to reveal its private information to neighboring agents.

    \item Figures~\ref{fig:fig7} and~\ref{fig:fig8} show the evolution of the privacy-preserving
    distributed average unit state estimators $\hat{x}_{{\rm a},i}^{\alpha}(t)$ and
    $\hat{x}_{{\rm a},i}^{\beta}(t)$. The results demonstrate fast convergence of both
    sub-states to the scaled average unit state $\eta x_{\rm a}(t)$ with high accuracy,
    validating the effectiveness of the modified state-decomposition-based consensus
    mechanism.

    \item Figures~\ref{fig:fig9} demonstrates the behavior of the distributed estimator for
    the average desired power. The results show that each battery unit’s estimate
    $\hat{p}_{\mathrm{a},i}(t)$ closely follows the scaled signal $\sigma p_{\mathrm{a}}(t)$,
    confirming accurate estimation of the global power reference while preventing direct
    exposure of the true average desired power.
\end{itemize}

These results confirm that the proposed algorithm achieves the control objectives while
preserving privacy during discharging operation.

\subsection{Charging Mode Performance}

Figures~\ref{fig:fig10} to~\ref{fig:fig15} present the simulation results for the charging mode, where the desired charging power is defined as \( p^{*}(t)=(4200\sin(t) - 4200) \) W. The initial SoC values are set to \( (0.04, 0.11, 0.25, 0.2, 0.27, 0.12) \).

Similar to the discharging case, all battery units converge toward a common SoC
trajectory, and the total charging power accurately tracks the desired reference
$p^{*}(t)$. The individual charging power contributions are properly distributed among
the battery units, and the privacy-preserving estimators for both the average unit state
and the average desired power exhibit fast and accurate convergence. These results
demonstrate that the proposed algorithm maintains consistent performance in both
charging and discharging operating modes.
\begin{figure}[!t]
\centering
\includegraphics[width=4.8in, height=1.6in]{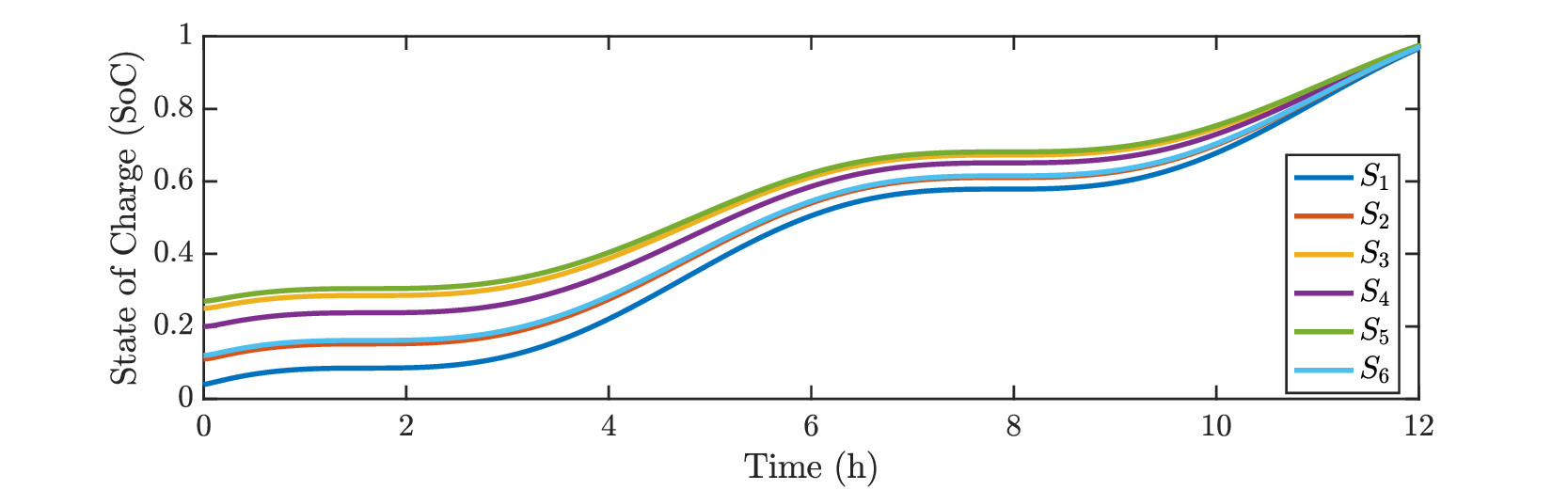}
\caption{The state of charge of each unit over time.}
\label{fig:fig10}
\end{figure}

\begin{figure}[!t]
\centering
\includegraphics[width=4.8in, height=1.6in]{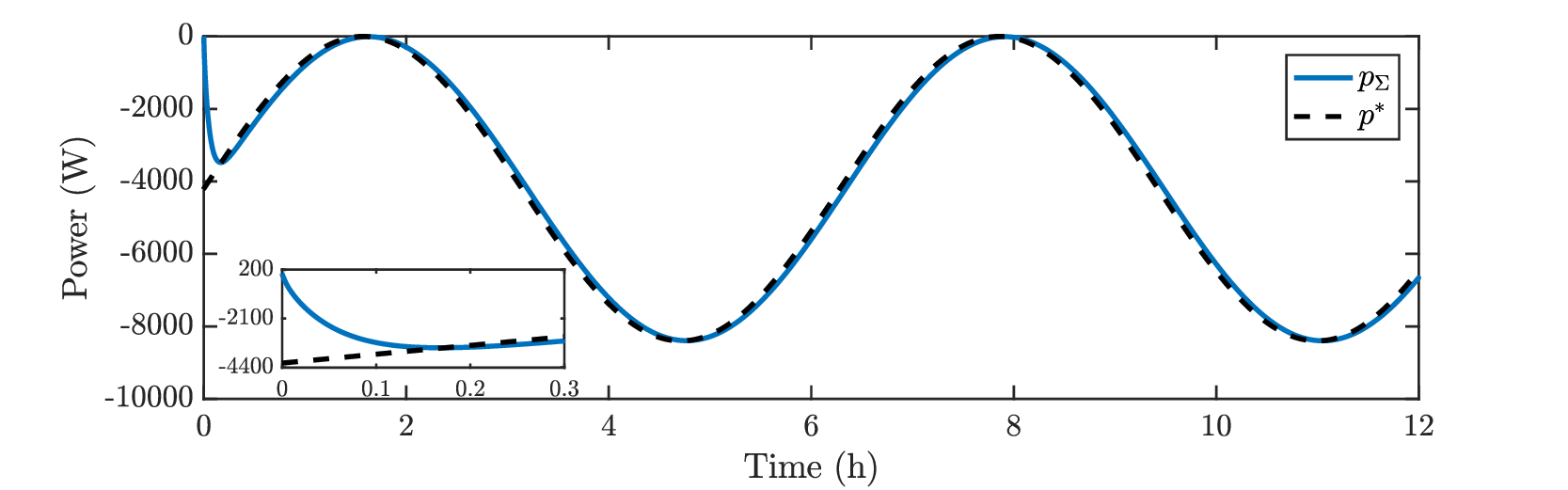}
\caption{The desired power and total charging power.}
\label{fig:fig11}
\end{figure}

\begin{figure}[!t]
\centering
\includegraphics[width=4.8in, height=1.6in]{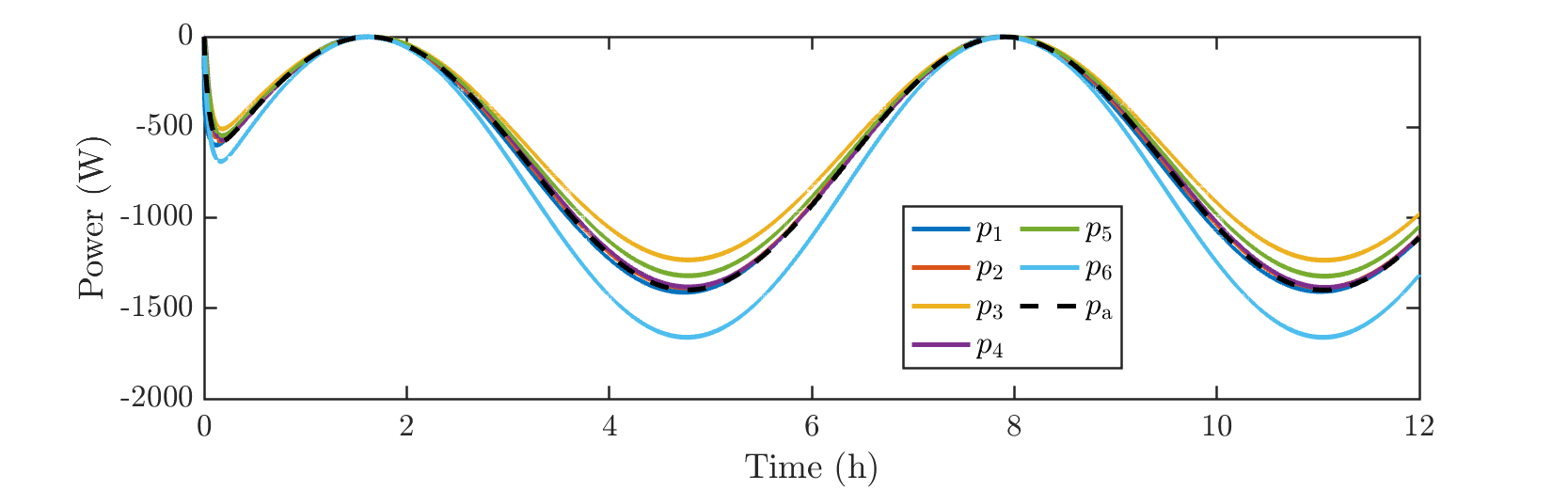}
\caption{The output power delivered by each battery.}
\label{fig:fig12}
\end{figure}

\begin{figure}[!t]
\centering
\includegraphics[width=4.8in, height=1.6in]{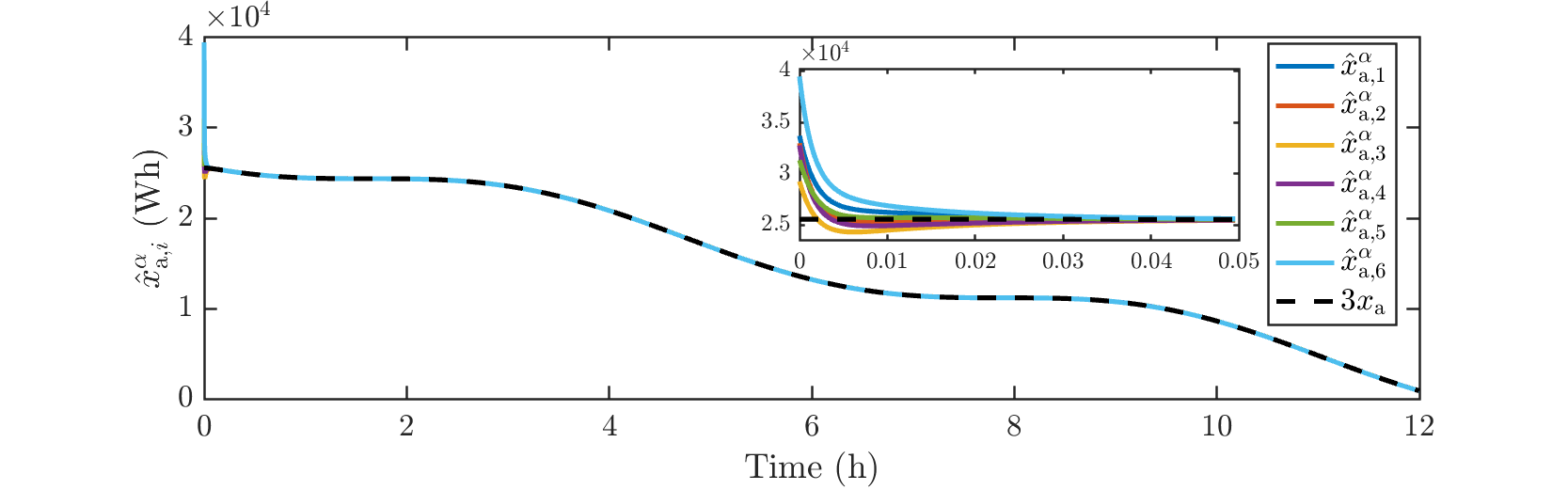}
\caption{The sub-state $\hat{x}_{{\rm a},i}^{\alpha}$ convergence to the scaled average unit state $\eta x_{\rm a}$.}
\label{fig:fig13}
\end{figure}

\begin{figure}[!t]
\centering
\includegraphics[width=4.8in, height=1.6in]{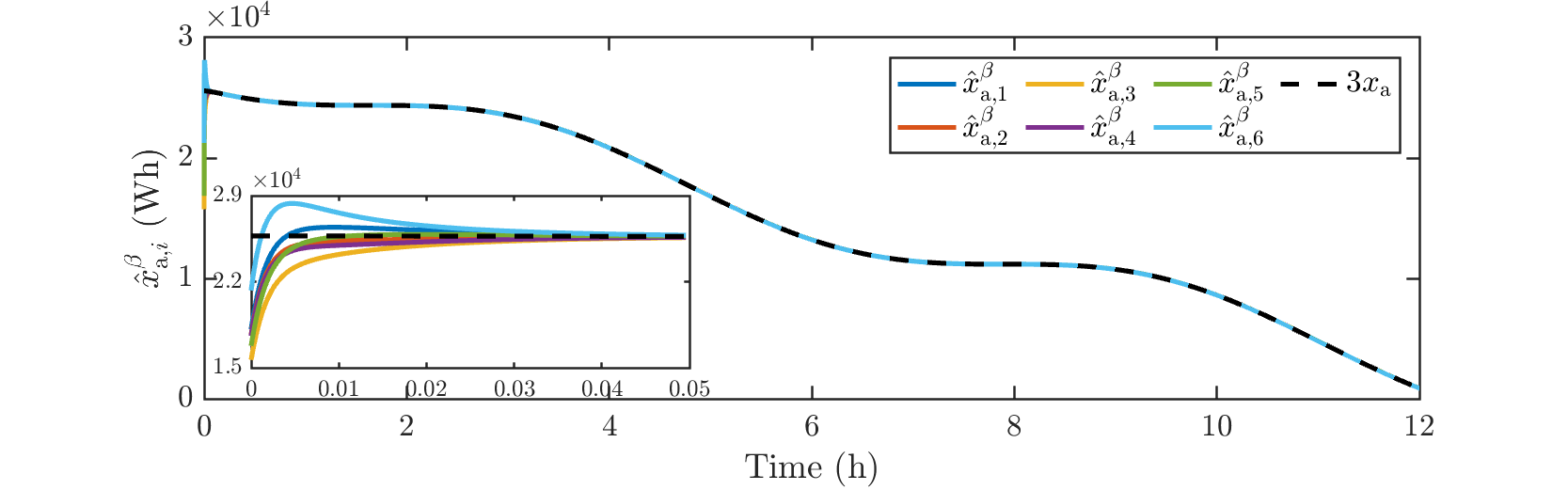}
\caption{The sub-state $\hat{x}_{{\rm a},i}^{\beta}$ convergence to the scaled average unit state $\eta x_{\rm a}$.}
\label{fig:fig14}
\end{figure}

\begin{figure}[!t]
\centering
\includegraphics[width=4.8in, height=1.6in]{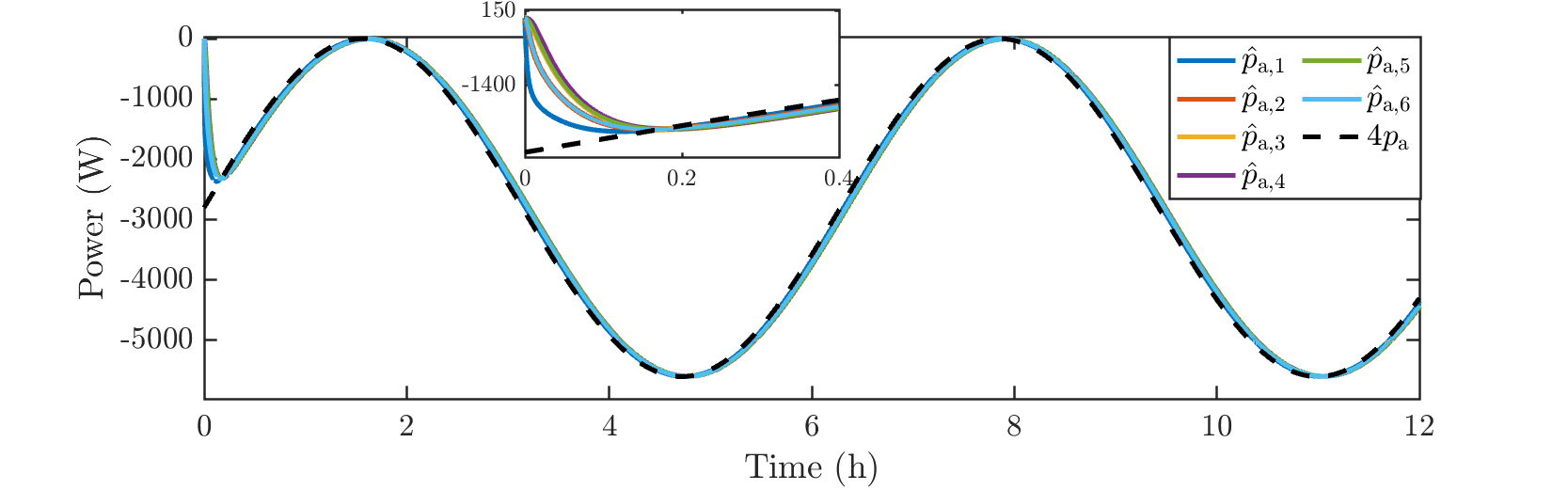}
\caption{The estimation of the scaled average desired power $\sigma p_{\rm a}$.}
\label{fig:fig15}
\end{figure}

\subsection{Privacy Verification Against External Eavesdropping}

Theorem~\ref{thm:4} establishes that the proposed algorithm prevents an external
eavesdropper from inferring the private states $x_i(t)$ and
$\dot{x}_i(t)$ for any admissible attack strategy. In this subsection, we will present simulation results illustratively validate this
theoretical guarantee under a representative observer-based eavesdropping
scenario.

Consider the discharging mode and suppose that an external eavesdropper attempts to infer the power delivery of each battery unit ($ p_i(t)=-\dot{x}_i(t)$) using the eavesdropping scheme described in Section~\ref{sec:attack}.

The simulation results under the external eavesdropping attack in the discharging mode
can be summarized as follows.
\begin{itemize}
    \item Figure~\ref{fig:fig16} presents the attacker’s reconstruction of the power
    delivered by each battery unit, $p_i(t) = -\dot{x}_i(t)$, when the non-privacy-preserving
    distributed control scheme \eqref{eq:10}, \eqref{eq:12}, and \eqref{eq:14} is employed.
    The reconstructed signals closely match the true power trajectories of all battery
    units, indicating that the attacker can accurately infer individual unit power
    profiles from the known and observed data.

    \item The results in Figure~\ref{fig:fig16} demonstrate that, in the absence of privacy
    protection, the internal information of each battery unit is fully
    exposed to an external eavesdropper, leading to a privacy breach at the
    individual unit level.

    \item Figure~\ref{fig:fig17} shows the attacker’s reconstruction results when the
    proposed privacy-preserving scheme described in \eqref{decomp},
    \eqref{eq:privacy desired power}, and \eqref{eq:18} is applied.
    In contrast to Fig.~\ref{fig:fig16}, the attacker’s estimates exhibit significant
    deviations from the true power trajectories $p_i(t)$ for all battery units.

    \item The results in Figure~\ref{fig:fig17} confirm that, although the proposed algorithm
    maintains accurate power allocation and SoC balancing (Figure~\ref{fig:fig5} and Figure~\ref{fig:fig4}), the individual battery unit
    power signals remain effectively protected from the external eavesdropper, thereby
    validating the privacy-preserving capability of the proposed framework.
\end{itemize}

\begin{figure}[!t]
\centering
\includegraphics[width=4.8in, height=1.6in]{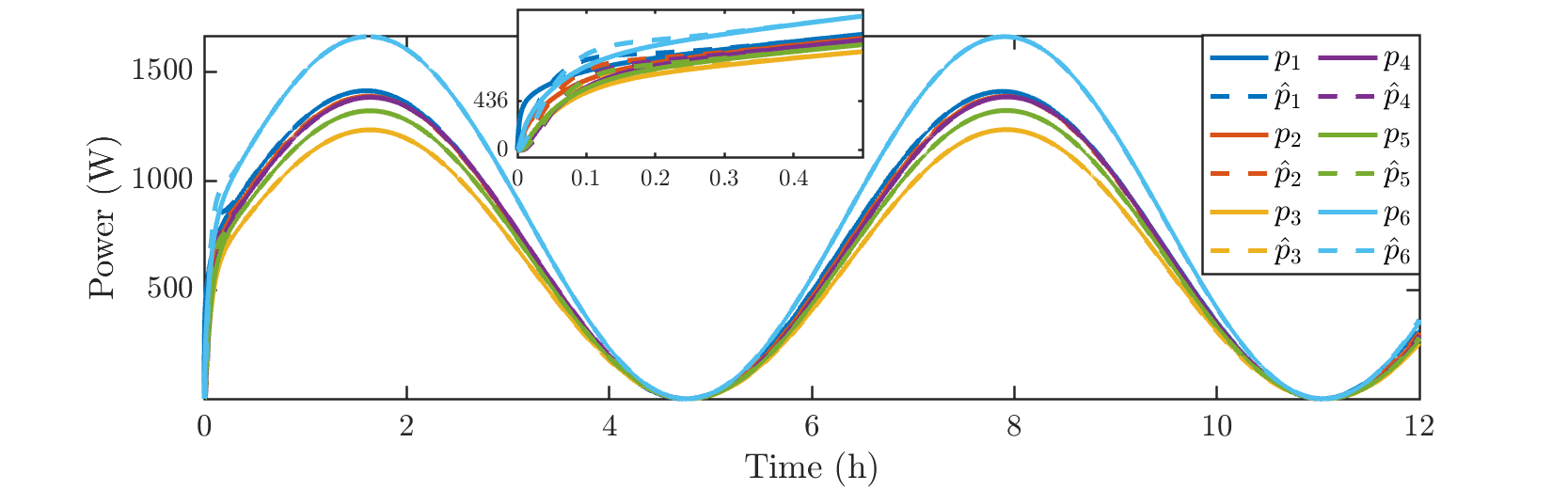}
\caption{Reconstruction of $p_i(t)$ based on the eavesdropping scheme described in Section~\ref{sec:attack} under the non-privacy-preserving scheme.}
\label{fig:fig16}
\end{figure}

\begin{figure}[!t]
\centering
\includegraphics[width=4.8in, height=1.6in]{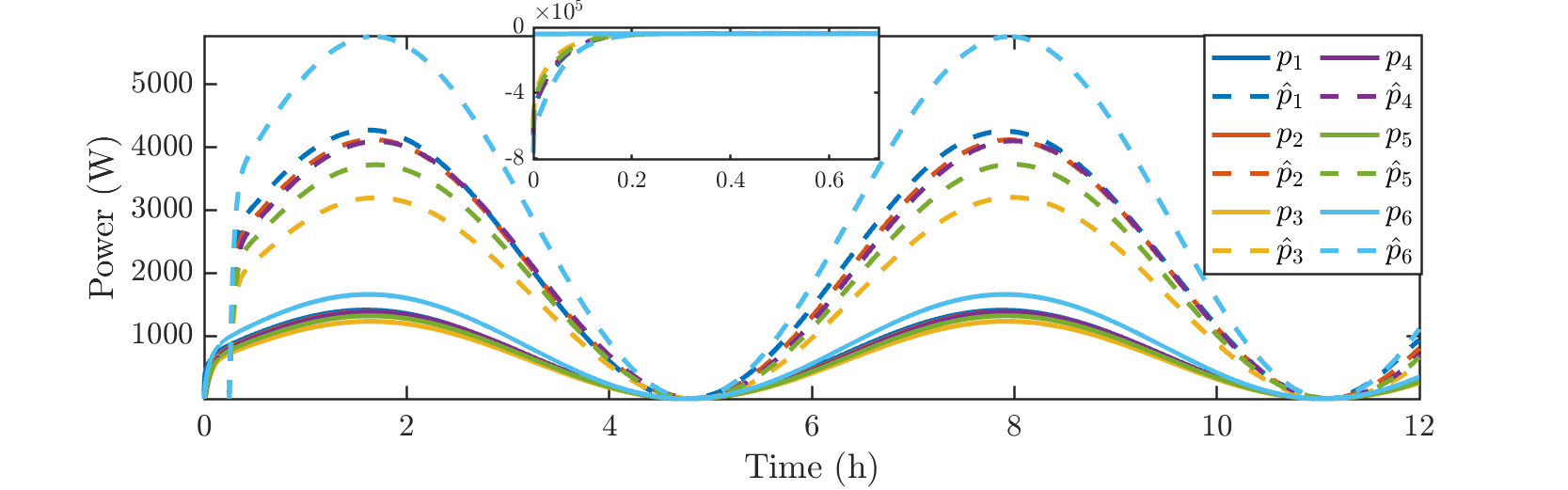}
\caption{Reconstruction of $p_i(t)$ based on the eavesdropping scheme described in Section~\ref{sec:attack} under the privacy-preserving scheme.}
\label{fig:fig17}
\end{figure}

\begin{figure}[!t]
\centering
\includegraphics[width=0.9\linewidth]{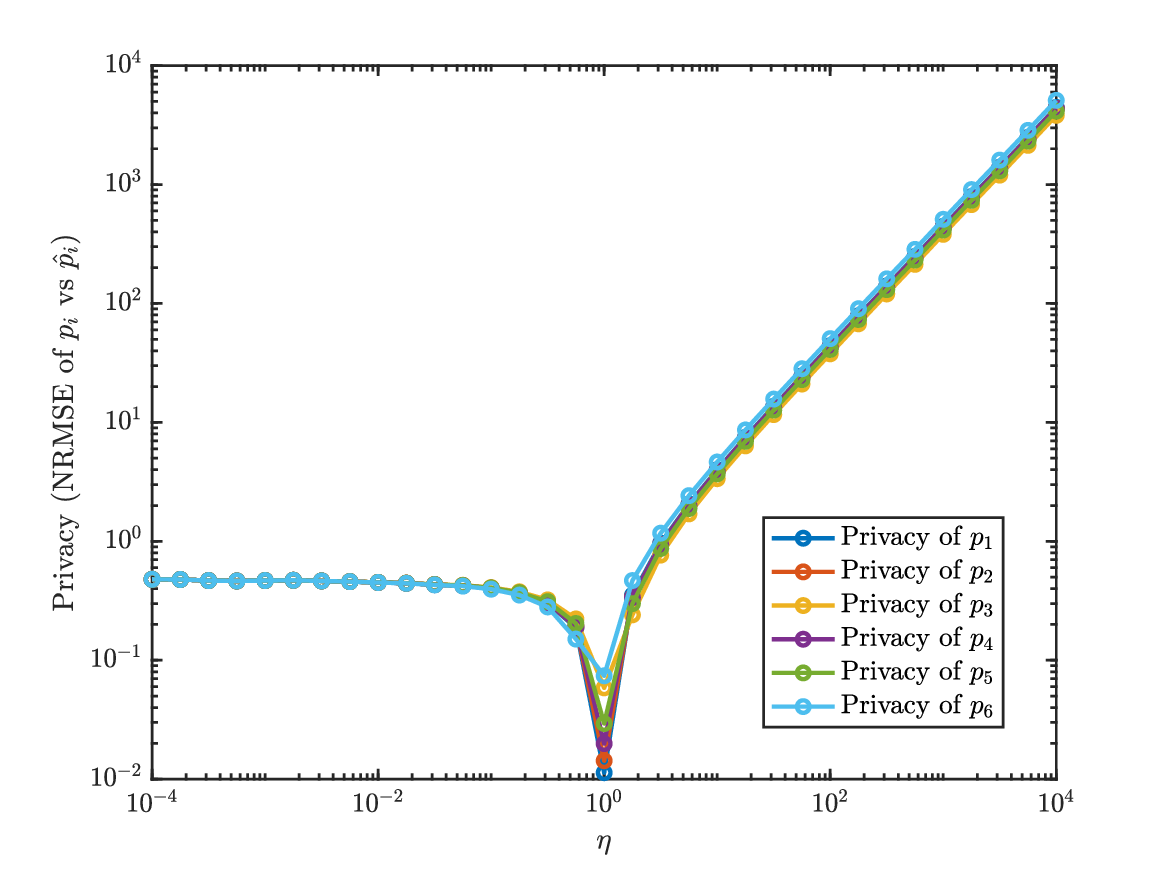}
\caption{Effect of $\eta$ on the privacy of $p_i$.}
\label{fig:fig18}
\end{figure}

\begin{figure}[!t]
\centering
\includegraphics[width=0.9\linewidth]{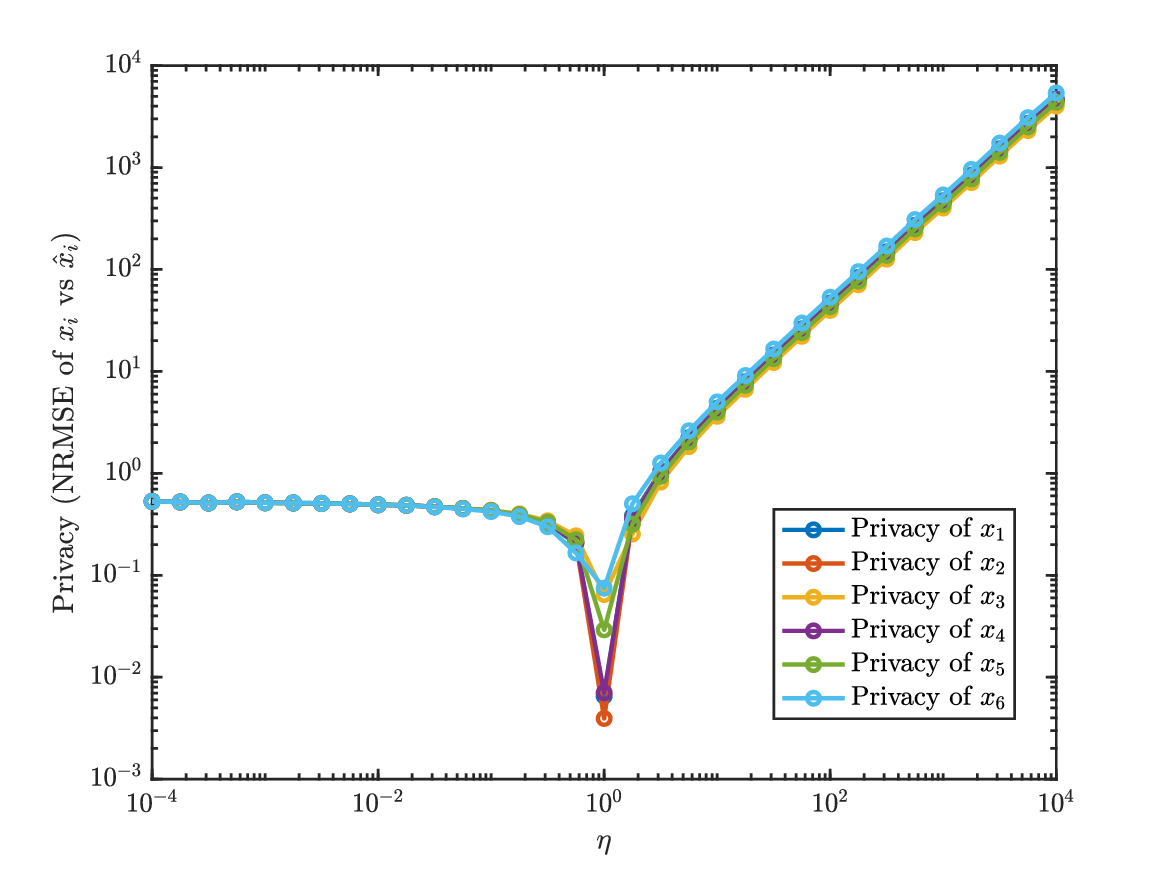}
\caption{Effect of $\eta$ on the privacy of $x_i$.}
\label{fig:fig21}
\end{figure}

\subsection{Impact of Parameter Selection on Privacy and System Performance}

Privacy of each battery unit's internal signals $x_i(t)$ and $p_i(t)=-\dot{x}_i(t)$ (for the discharging mode) is quantified with a normalized root mean square error (NRMSE). We compute the difference between the true signal and the attacker’s reconstructed signal, take the root mean square of this error over the post-transient interval, and then normalize it by the range of the true signal over the same interval. By design, a larger NRMSE indicates stronger privacy. Figure~\ref{fig:fig18} shows the attacker’s estimation error for the power delivered by each battery unit $p_i(t)$ over a range of $\eta$ values. The lowest privacy occurs at $\eta=1$, which corresponds to the conventional state-decomposition. For any $\eta>0$ with $\eta\neq 1$ (i.e., as $\eta$ moves away from $1$), the NRMSE increases, demonstrating that the proposed method provides stronger privacy than the conventional case.
The same qualitative dependence on $\eta$ is observed for the internal state $x_i(t)$, as shown in Figure~\ref{fig:fig21}, confirming consistent privacy behavior for both internal signals. Therefore, introducing $\eta$ preserves the privacy of not only the average reference input signal $x_{\mathrm{a}}(t)$ (as shown in Figure~\ref{fig:fig13}), but also the internal signals $p_i(t) = -\dot{x}_i(t)$ and $x_i(t)$.
 
As discussed in Theorem~\ref{thm:2} and Remark~\ref{remark1}, the scaling parameter $\eta$ does not affect the convergence speed or the steady-state convergence error of the proposed algorithm. Therefore, no trade-off exists between parameter selection, privacy protection, and system performance. As $\eta$ moves away from $\eta = 1$, privacy protection is enhanced without compromising convergence or control performance.

The parameter $\sigma$ protects the average desired power $p_{\rm{a}}(t)$ by scaling it before the leader transmits it to the other units. As shown in Figure~\ref{fig:fig9}, the external eavesdropper observes only a scaled version of the average desired power. According to Lemma~\ref{lem:2}, this scaling does not affect the convergence accuracy of the estimator. Therefore, choosing $\sigma$ away from $\sigma = 1$ provides privacy protection for $p_{\mathrm{a}}(t)$ without compromising system performance.

\subsection{Comparison Analysis}

In this subsection, we compare the proposed privacy-preserving distributed control algorithm with the case where the average unit state estimator is designed based on the state-decomposition-based method in \cite{zhang2022privacy}. We consider the discharging mode. For a fair comparison, identical parameter settings and network topology are used.

\begin{figure}[!t]
\centering
\includegraphics[width=4.8in, height=1.6in]{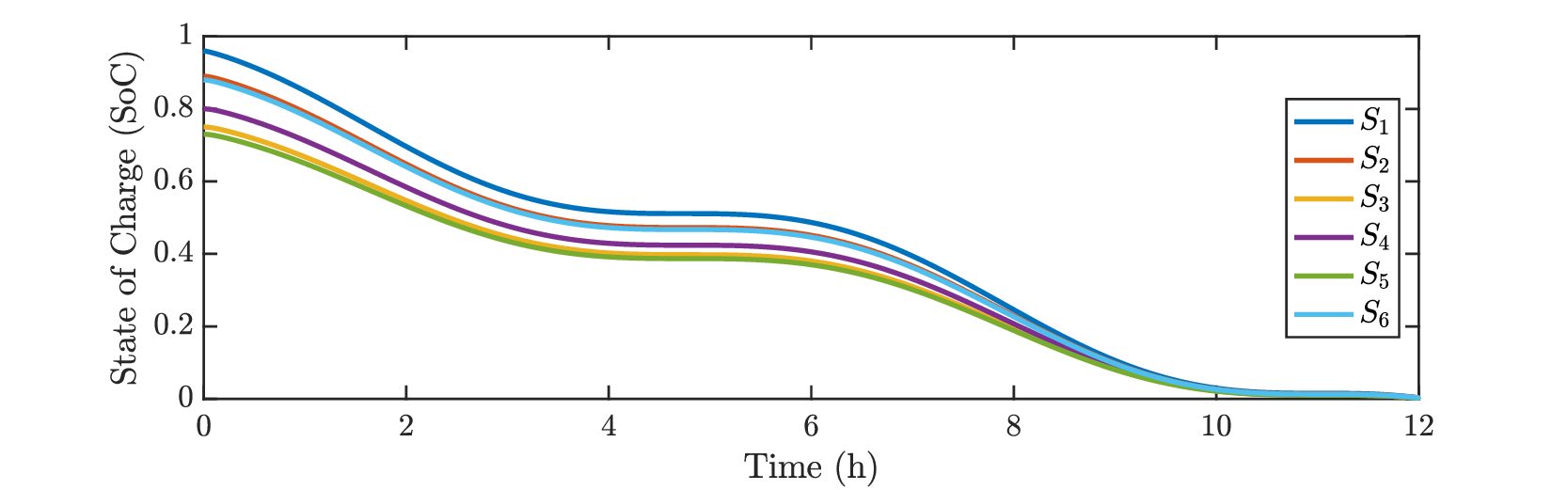}
\caption{The state of charge of each unit over time under the method in reference~\cite{zhang2022privacy}.}
\label{fig:fig19}
\end{figure}

\begin{figure}[!t]
\centering
\includegraphics[width=4.8in, height=1.6in]{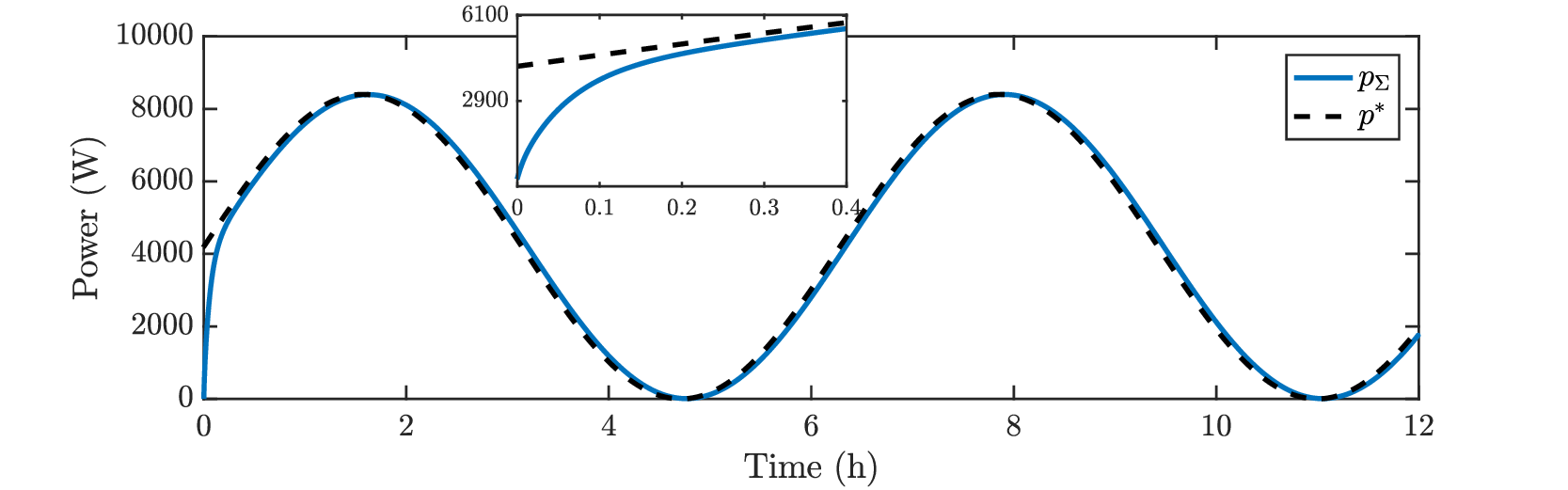}
\caption{The desired power and the total discharging power under the method in reference~\cite{zhang2022privacy}.}
\label{fig:fig20}
\end{figure}

Figures~\ref{fig:fig19} and~\ref{fig:fig20} illustrate the SoC balancing and power tracking performance of the distributed control algorithm when the average unit state estimator is designed based on the state-decomposition approach in~\cite{zhang2022privacy}. This demonstrates that the proposed privacy-preserving mechanism does not degrade the convergence accuracy or the control performance compared with the state-decomposition-based approach.

As discussed in Remark~\ref{remark2}, the method in \cite{zhang2022privacy} protects individual states but does not conceal the consensus value. In contrast, the proposed method intentionally scales the consensus trajectory itself, thereby preventing inference of both individual states and their average values.

The privacy effect is illustrated in Figure~\ref{fig:fig18} and Figure~\ref{fig:fig21}. When $\eta = 1$, the proposed method reduces to the state-decomposition-based approach in \cite{zhang2022privacy}, resulting in the lowest level of privacy.  Importantly, adjusting $\eta$ does not affect the convergence accuracy of the algorithm, which is consistent with the theoretical guarantees established in Theorem~\ref{thm:2} and Remark~\ref{remark1}.

Overall, the results demonstrate that the proposed method achieves comparable control performance while providing enhanced and tunable privacy protection compared with the state-decomposition-based approach.

\section{Conclusions}\label{sec:conclusion}

We proposed a novel privacy-preserving distributed control framework for SoC balancing and power tracking in a networked BESS. The framework integrates a power allocation law supported by two privacy-preserving distributed estimators, one for estimating the scaled average unit state and another for estimating the scaled average desired power. The average unit state estimator is constructed using a state-decomposition approach, which safeguards both individual battery states and their average. The average desired power estimator achieves privacy by coordinating over a scaled version of the true average desired power, thereby preventing external inference. Together, these estimators and the power allocation law enable accurate system coordination while preventing external eavesdroppers from inferring the internal states of individual batteries. Theoretical analysis and simulation results demonstrate the effectiveness, scalability, and ability to preserve privacy of the proposed approach, offering a promising solution for secure and efficient energy management in distributed power systems.

\section*{CRediT authorship contribution statement}

{\bf Mihitha Maithripala:} Writing – review \& editing, Writing – original
draft, Visualization, Methodology, Investigation, Formal analysis, Conceptualization. {\bf Zongli Lin:} Writing – review \& editing, Supervision, Project administration, Investigation, Funding acquisition, Formal
analysis, Conceptualization.

\section*{Declaration of competing interest}

The authors declare that they have no known competing financial interests or personal relationships that could have appeared to influence the work reported in this paper.

\section*{Data availability}

No data was used for the research described in the article.

\bibliographystyle{elsarticle-num}
\bibliography{reference-unmarked}

@article{yu2025optimized,
  title={Optimized distributed energy management for {BESS} incorporating time-varying delays with an improved bipartite grouping model simultaneously balancing {SoH} and {SoC}},
  author={Yu, Yang and Wang, Boxiao and Li, Menglu and Lv, Tingyan},
  journal={Journal of Energy Storage},
  volume={126},
  pages={117006},
  year={2025},
  publisher={Elsevier}
}

@article{hill2012battery,
  title={Battery energy storage for enabling integration of distributed solar power generation},
  author={Hill, C A and Such, M C and Chen, D and Gonzalez, J and Grady, W M},
  journal={IEEE Transactions on Smart Grid},
  volume={3},
  number={2},
  pages={850--857},
  year={2012}
}

@article{lawder2014battery,
  title={Battery energy storage system ({BESS}) and battery management system ({BMS}) for grid-scale applications},
  author={Lawder, Matthew T and Northrop, Paul WC and Subramanian, Venkat R and Braatz, Richard D and Klein, Richard and Howey, David A},
  journal={Proceedings of the IEEE},
  volume={102},
  number={6},
  pages={1014--1030},
  year={2014}
}

@article{xu2018distributed,
  title={Distributed robust control strategy of grid-connected inverters for energy storage systems’ state-of-charge balancing},
  author={Xu, Y and Liu, W and Liu, Y},
  journal={IEEE Transactions on Smart Grid},
  volume={9},
  number={6},
  pages={5907--5917},
  year={2018}
}

@article{qiu2025distributed,
  title={Distributed optimization for the trade-off between state-of-charge balancing and strategic battery usage in a networked {BESS}},
  author={Qiu, Chenyang and Qian, Yangyang and Lin, Zongli and Shamash, Yacov A},
  journal={Journal of Energy Storage},
  volume={105},
  pages={114550},
  year={2025},
  publisher={Elsevier}
}

@article{xing2019distributed,
  title={Distributed state-of-charge balance control with event-triggered signal transmissions for multiple energy storage systems in smart grid},
  author={Xing, Ling and Mishra, Y and Tian, Y-C and Ledwich, G and Zhou, C and Du, W and Qian, F},
  journal={IEEE Transactions on Systems, Man, and Cybernetics: Systems},
  volume={49},
  number={8},
  pages={1601--1612},
  year={2019}
}

@article{liu2025fixed,
  title={Fixed-time quasi-consensus energy management method for battery energy storage systems in {DC} microgrids under two types of {DoS} attacks},
  author={Liu, Dan and Jiang, Kezheng and Xiong, Ping and Tian, Xu and Wang, Rui and Sun, Qiuye},
  journal={Journal of Energy Storage},
  volume={113},
  pages={115574},
  year={2025},
  publisher={Elsevier}
}

@article{lu2014state,
  title={State-of-charge balance using adaptive droop control for distributed energy storage systems in {DC} microgrid applications},
  author={Lu, Xiaonan and Sun, Kai and Guerrero, Josep M and Vasquez, Juan C and Huang, Ling},
  journal={IEEE Transactions on Industrial Electronics},
  volume={61},
  number={6},
  pages={2804--2815},
  year={2014}
}

@article{qu2018cooperative,
  title={Cooperative control of heterogeneous connected energy storage systems for dynamic load sharing in microgrids},
  author={Qu, Zhihua and Wang, Jian},
  journal={IEEE Transactions on Industrial Electronics},
  volume={65},
  number={8},
  pages={6541--6551},
  year={2018}
}

@article{bidram2012secondary,
  title={Secondary control of microgrids based on distributed cooperative control of multi-agent systems},
  author={Bidram, Ali and Davoudi, Ali and Lewis, Frank L and Qu, Zhihua},
  journal={IET Generation, Transmission and Distribution},
  volume={7},
  number={8},
  pages={822--831},
  year={2012}
}

@article{huang2023voltage,
  title={A voltage-shifting-based state-of-charge balancing control for distributed energy storage systems in islanded {DC} microgrids},
  author={Huang, Zuliang and Li, Yan and Cheng, Xin and Ke, Mingjun},
  journal={Journal of Energy Storage},
  volume={69},
  pages={107861},
  year={2023},
  publisher={Elsevier}
}

@article{yan2019event,
  title={Event-triggered distributed control for {SoC} balancing of distributed energy storage systems in {DC} microgrids},
  author={Yan, Yong and Yue, Dong and Pang, Chao and Yang, Jun},
  journal={IEEE Transactions on Smart Grid},
  volume={10},
  number={6},
  pages={6613--6623},
  year={2019}
}

@article{kossek2024privacy,
  title={Privacy in cooperative control of cyber-physical systems: A survey of techniques and challenges},
  author={Kossek, Matej and Stefanovic, Milutin},
  journal={Journal of Intelligent and Robotic Systems},
  volume={110},
  number={129},
  pages={1--27},
  year={2024}
}

@article{kia2015dynamic,
  title={Dynamic average consensus under limited control authority and privacy requirements},
  author={Kia, S S and Cort{\'e}s, J and Mart{\'\i}nez, S},
  journal={International Journal of Robust and Nonlinear Control},
  volume={25},
  number={13},
  pages={1941--1966},
  year={2015}
}

@inproceedings{7852360,
  title={Distributed computing over encrypted data},
  author={Freris, Nikolaos M and Patrinos, Panagiotis},
  booktitle={2016 54th Annual Allerton Conference on Communication, Control, and Computing (Allerton)},
  pages={1116--1122},
  year={2016},
  organization={IEEE}
}

@article{lu2018privacy,
  title={Privacy preserving distributed optimization using homomorphic encryption},
  author={Lu, Yang and Zhu, Minghui},
  journal={Automatica},
  volume={96},
  pages={314--325},
  year={2018},
  publisher={Elsevier}
}

@article{Nozari2015DifferentiallyPA,
  title={Differentially private average consensus: Obstructions, trade-offs, and optimal algorithm design},
  author={Nozari, Erfan and Tallapragada, Pavankumar and Cort{\'e}s, Jorge},
  journal={Automatica},
  volume={81},
  pages={221--231},
  year={2017},
  publisher={Elsevier}
}

@article{ding2021differentially,
  title={Differentially private distributed optimization via state and direction perturbation in multiagent systems},
  author={Ding, Tie and Zhu, Shanying and He, Jianping and Chen, Cailian and Guan, Xinping},
  journal={IEEE Transactions on Automatic Control},
  volume={67},
  number={2},
  pages={722--737},
  year={2021},
  publisher={IEEE}
}

@ARTICLE{9910413,
  author={Zhao, Daduan and Zhang, Chenghui and Cao, Xiangyang and Peng, Chao and Sun, Bo and Li, Ke and Li, Yan},
  journal={IEEE Transactions on Control Systems Technology}, 
  title={Differential Privacy Energy Management for Islanded Microgrids With Distributed Consensus-Based {ADMM} Algorithm}, 
  year={2023},
  volume={31},
  number={3},
  pages={1018-1031},
  keywords={Microgrids;Energy management;Privacy;Optimization;Differential privacy;Convergence;Load modeling;Demand response;differential privacy;distributed alternating direction method of multipliers (ADMM);economic dispatch;microgrids},
  }

@article{inoue2007bidirectional,
  title={A Bidirectional {DC--DC} Converter for an Energy Storage System With Galvanic Isolation},
  author={Inoue, Shigenori and Akagi, Hirofumi},
  journal={IEEE Transactions on Power Electronics},
  year={2007},
  volume={22},
  pages={2299--2306},
  
}

@article{meng2021distributed,
  author={Meng, Tingyang and Lin, Zongli and Shamash, Yacov A.},
  journal={IEEE/CAA Journal of Automatica Sinica}, 
  title={Distributed Cooperative Control of Battery Energy Storage Systems in {DC} Microgrids}, 
  year={2021},
  volume={8},
  number={3},
  pages={606--616},
  keywords={Network topology; Decentralized control; Estimation; Microgrids; Batteries; Topology; Power generation; Cooperative control; Energy storage systems; Multi-agent systems},
  
}

@article{qian2023distributed,
  title={Distributed event-triggered algorithms for the management of networked battery systems},
  author={Qian, Yangyang and Meng, Tingyang and Lin, Zongli and Wan, Yan and Shamash, Yacov A},
  journal={International Journal of Robust and Nonlinear Control},
  volume={35},
  number={17},
  pages={7041--7066},
  year={2025},
  publisher={Wiley Online Library}
}

@inproceedings{spanos2005dynamic,
  title={Dynamic consensus on mobile networks},
  author={Spanos, Demetri P and Olfati-Saber, Reza and Murray, Richard M},
  booktitle={IFAC world congress},
  pages={1--6},
  year={2005}
}

@article{kia2019tutorial,
  author={Kia, Solmaz S. and Van Scoy, Bryan and Cortes, Jorge and Freeman, Randy A. and Lynch, Kevin M. and Martinez, Sonia},
  journal={IEEE Control Systems Magazine}, 
  title={Tutorial on Dynamic Average Consensus: The Problem, Its Applications, and the Algorithms}, 
  year={2019},
  volume={39},
  number={3},
  pages={40--72},
  keywords={Heuristic algorithms; Consensus algorithm; Tutorials; Robot sensing systems; Environmental monitoring; Ad hoc networks; Laplace equations; Smart grids; Smart devices},
 
}

@ARTICLE{7548310,
  author={Cai, He and Hu, Guoqiang},
  journal={IEEE Transactions on Industrial Informatics}, 
  title={Distributed Control Scheme for Package-Level State-of-Charge Balancing of Grid-Connected Battery Energy Storage System}, 
  year={2016},
  volume={12},
  number={5},
  pages={1919-1929},
  keywords={Batteries;State of charge;Decentralized control;Communication networks;Informatics;Process control;Battery energy storage system (BESS);distributed control;energy conversion efficiency (ECE);state-of-charge (SoC) balancing},
 }

@article{zhang2022privacy,
  title={Privacy-preserving dynamic average consensus via state decomposition: Case study on multi-robot formation control},
  author={Zhang, Kaixiang and Li, Zhaojian and Wang, Yongqiang and Louati, Ali and Chen, Jian},
  journal={Automatica},
  volume={139},
  pages={110182},
  year={2022},
  publisher={Elsevier}
}

@ARTICLE{8657789,
  author={Wang, Yongqiang},
  journal={IEEE Transactions on Automatic Control}, 
  title={Privacy-Preserving Average Consensus via State Decomposition}, 
  year={2019},
  volume={64},
  number={11},
  pages={4711-4716},
  keywords={Privacy;Convergence;Cryptography;Observability;Couplings;Optimization;Consensus algorithm;Average consensus;decomposition;privacy preservation},
 }

@article{luan2024privacy,
  title={Privacy-Preserving Optimization Algorithm for Distributed Energy Management Over Time-Varying Graphs: A State Decomposition Method},
  author={Luan, Meng and Wen, Guanghui and Yang, Tao},
  journal={International Journal of Robust and Nonlinear Control},
  volume={35},
  number={17},
  pages={7190--7201},
  year={2025},
  publisher={Wiley Online Library}
}

@ARTICLE{4118472,
  author={Olfati-Saber, Reza and Fax, J. Alex and Murray, Richard M.},
  journal={Proceedings of the IEEE}, 
  title={Consensus and Cooperation in Networked Multi-Agent Systems}, 
  year={2007},
  volume={95},
  number={1},
  pages={215-233},
  keywords={Multiagent systems;Performance analysis;Sensor fusion;Failure analysis;Information analysis;Algorithm design and analysis;Robustness;Network topology;Convergence;Matrices;Consensus algorithms;cooperative control;flocking;graph Laplacians;information fusion;multi-agent systems;networked control systems;synchronization of coupled oscillators},
  }

@article{ruan2019secure,
  title={Secure and privacy-preserving consensus},
  author={Ruan, Minghao and Gao, Huan and Wang, Yongqiang},
  journal={IEEE Transactions on Automatic Control},
  volume={64},
  number={10},
  pages={4035--4049},
  year={2019},
  publisher={IEEE}
}

@article{chen2022privacy,
  title={Privacy-preserving distributed economic dispatch of microgrids: A dynamic quantization-based consensus scheme with homomorphic encryption},
  author={Chen, Wei and Liu, Lu and Liu, Guo-Ping},
  journal={IEEE Transactions on Smart Grid},
  volume={14},
  number={1},
  pages={701--713},
  year={2022},
  publisher={IEEE}
}

@article{chen2024new,
  title={A new privacy-preserving average consensus algorithm with two-phase structure: Applications to load sharing of microgrids},
  author={Chen, Wei and Wang, Zidong and Liu, Qinyuan and Yue, Dong and Liu, Guo-Ping},
  journal={Automatica},
  volume={167},
  pages={111715},
  year={2024},
  publisher={Elsevier}
}

\end{document}